\newcommand{\lcon}{\fs{\cons{left}}}
\newcommand{\rcon}{\fs{\cons{right}}}
\newcommand{\pcon}{\fs{\func{path}}}
\newcommand{\Path}{\fs{\Dat{Path}}}
\newcommand{\pconp}[1]{\pcon~(#1)}
\newcommand{\pappSym}{\elim@}
\newcommand{\papp}[2]{#1.\pappSym~#2}
\newcommand{\Sph}[1]{\Dat{\mathbb S^#1}}
\newcommand{\Tori}{\Dat{Torus}}
\newcommand{\coe}{\fs{\elim{coe}}}
\newcommand{\squeeze}[2]{{#1 \wedge #2}}
\newcommand{\kwwhere}{\kw{where}}
\newcommand{\kwdata}{\kw{data}}
\newcommand{\kwwith}{\kw{with}}
\newcommand{\kwcase}{\kw{case}}
\newcommand{\poscon}{\cons{pos}}
\newcommand{\negcon}{\cons{neg}}
\newcommand{\zerocon}{\cons{zero}}
\newcommand{\succon}{\cons{suc}}
\newcommand{\basecon}{\cons{base}}
\newcommand{\loopcon}{\cons{loop}}
\newcommand{\pointcon}{\cons{point}}
\newcommand{\linecon}[1]{\cons{line#1}}
\newcommand{\facecon}{\cons{face}}
\newcommand{\Nat}{\Dat{\mathbb N}}
\newcommand{\NatInf}{\Dat{\mathbb N^\infty}}
\newcommand{\Int}{\Dat{\mathbb Z}}
\newcommand{\II}{\Dat{\mathbb I}}
\newcommand{\vars}[1]{\textsf{vars}(#1)}
\newcommand{\replace}[2]{\textsf{rename}(#1,#2)}
\newcommand{\confluence}[5]{\textsf{confluent}(#1,#2,#3,#4,#5)}
\newcommand{\CubicalAgda}{{\texttt{Cubical} \texttt{Agda}}}
\DeclareRobustCommand{\SkipTocEntry}[5]{}
\begin{document}

\title{(Co)condition hits the Path}

\author{Tesla Zhang}
\address[The paper is written while the author is at The Pennsylvania State University, but the email address under that institution no longer works.]{Carnegie Mellon University}
\email{teslaz@cmu.edu}

\author{Valery Isaev}
\address{JetBrains Research}
\email{valery.isaev@jetbrains.com}

\begin{abstract}
We propose an enhancement to inductive types and records in a dependent type theory, 
namely (co)conditions. With a primitive interval type,
conditions generalize the cubical syntax of higher
inductive types in homotopy type theory, while coconditions generalize the
cubical path type. (Co)conditions are also useful without an interval type.
The duality between conditions and coconditions is presented in an interesting way:
The elimination principles of inductive types with conditions can be internalized with
records with coconditions and vice versa.

However, we do not develop the metatheory of conditions and coconditions in this paper.
Instead, we only present the type checking.
\keywords{
inductive types,
homotopy type theory,
cubical type theory,
conditions,
confluence
}
\end{abstract}

\maketitle
\tableofcontents
\section{Introduction}%
\label{sec:intro}

Inductive types are useful abstractions of data in functional programming languages, where data are placed into a labeled sum of products.
Each label along with the product it carries is called a \textit{constructor}.
The constructors define how instances of the type could be constructed.
Many functional programming languages such as Agda, OCaml, Standard ML\footnote{Haskell is not listed
as its ``data types'' are coinductive due to laziness.}
have adopted inductive types as their fundamental way of defining data types.

\subsection{Quotients with definitional equality} 
\label{sub:intro-int}

One property of inductive types is that constructors are \textit{disjoint} from each other.
While it may be useful in some cases, the disjointedness can also cause problems. For example,
we may define the type of integers with two constructors (one for positive integers and one for negative integers), 
both taking a natural number as its argument.

We would like to define the integer type with a constructor mapping natural number $n$ to $+n$
and a constructor mapping natural number $n$ to $-n$, but this definition will result in two distinct
zeros (one positive and one negative).
To eliminate one of the zeros, we may change the meaning of one constructor by shifting the
corresponding integer value by one to get rid of one of the zeros.
In the following definition (adapted from the built-in natural number definition in Agda~\cite{Agda}),
the mathematical interpretation of $\cons{negsuc}~n$ is $-(n+1)$:

\begin{figure}[h]
\centering
\subfloat{
$\begin{aligned}
&\kwdata~\Nat:\UU \\[-0.3em]
& \mid~\zerocon \\[-0.3em]
& \mid~\succon~\Nat
\end{aligned}$}
\qquad
\subfloat{
$\begin{aligned}
&\kwdata~\Dat{\mathbb Z'}:\UU \\[-0.3em]
& \mid\poscon~\Nat \\[-0.3em]
& \mid\cons{negsuc}~\Nat
\end{aligned}$}
\end{figure}

We have to use this workaround because we cannot just identify two instances of a certain type.
The disjointedness of constructors makes such identification impossible.
The situation of integer can be generalized to any quotients --- the general idea of identifying
distinct values of a certain set to form a new set.
Quotients are very useful in mathematics, but they are not possible with inductive types.

We present \textit{conditions}, an extension to inductive types,
that allows identification of distinct values under some restrictions.
A condition is a \textit{partial} function defined by pattern matching \textit{attached} to a constructor,
whose type must match the constructor.
The constructor will reduce according to the partial function and will be canonical if stuck.
With conditions, we can define integers like this\footnote{
The $\kwwith$ keyword here is unrelated to the ``with construct'' in Agda~\cite[\S 2.3]{AgdaOri}.}:

\lessSpace{-1.2}
\begin{align*}
&\kwdata~\Int:\UU \\[-0.3em]
& \mid\poscon~\Nat \\[-0.3em]
& \mid\negcon~\Nat~
    \kwwith~\zerocon ⇒ \poscon~\zerocon
\end{align*}

The condition here is $\kwwith~\zerocon ⇒ \poscon~\zerocon$, attached to the constructor $\negcon$.
Observe the type --- the condition maps $\Nat$ to $\Int$, which fits the type of $\negcon$.
Due to the condition, $\negcon~\zerocon$ will reduce to $\poscon~\zerocon$.
So, $\negcon~(\succon~\zerocon)$ is a canonical instance of $\Int$, while $\negcon~\zerocon$
is an instance of $\Int$ that evaluates to $\poscon~\zerocon$ --- another canonical instance of $\Int$.

This definitional equality on integers makes it closer to its definition in mathematics.
In general, any identification (between distinct instances of an inductive type)
that can be represented as a pattern matching can also be defined with conditions.


\subsection{Higher inductive types} 
\label{sub:intro-hits}
In \textit{homotopy type theory}~\cite{hottbook}, types are interpreted as
homotopy types of spaces with their instances interpreted as points in the space.
The identity type in Martin-L\"of's type theory~\cite{MLTT,MLTT2}
(which homotopy type theory is based on) is interpreted as the \textit{path space},
with all relevant rules interpreted homotopically.
The inductive types one can define in those type theories are all discrete spaces because the
instances are distinct, and the identity proofs on these instances are trivial.

To define nontrivial paths in types, inductive types are generalized to \textit{higher inductive types}
(HITs), where constructors define not only the construction of the \textit{instances} of the type,
but also the \textit{identity relation} of the instances.
Simple constructors are called \textit{point constructors}, and they describe how points of the space are generated.
Constructors defining identity relations are called \textit{path constructors}, and they describe how these points 
are connected in the space.
Functions in homotopy type theory are all continuous --- the path constructors must be \textit{preserved}.

The idea of HITs looks similar to conditions.
For example, the integer type can be defined as a HIT, with two point-constructors $\poscon$, $\negcon$
and a path constructor of type $\poscon~\zerocon = \negcon~\zerocon$.
In fact, this is an example in~\cite[\S 2.3.1]{CubicalAgda}.
We could even imagine the general quotient type defined as a HIT.
However, there are two notable differences between HITs and inductive types with conditions:
\begin{itemize}
\item Conditions give definitional equalities, while HITs talk about propositional equalities.
\item The equalities introduced by conditions need to be representable as a (partial) pattern matching,
while the path constructors of many useful HITs do not satisfy this property.
\end{itemize}

\begin{example}
\label{ex:circle}
As an example of the latter difference, consider the type of circles $\Sph1$.
It has the following constructors:
\begin{mathpar}
{\basecon:\Sph1} \and
{\loopcon:\basecon=\basecon}
\end{mathpar}
This definition is standard in topology: a circle is a path with its two endpoints glued together.
In our case, the path is $\loopcon$, and its endpoints are both $\basecon$.
We cannot turn $\loopcon$ into a condition.
\end{example}

Fortunately, recent work on cubical type theory~\cite{CCHM,CHM} provides a syntax for HITs
that could be adapted to a type theory with conditions with some lightweight additions and modifications.
We will continue the discussion on HITs later.


\addtocontents{toc}{\SkipTocEntry}
\subsection*{Contributions} 

The following list also serves as an outline of the paper.
\begin{itemize}
\item We've motivated the design of conditions by quotients (in~\cref{sub:intro-int})
and HITs (in~\cref{sub:intro-hits}) in~\cref{sec:intro}.
\item We further exemplify conditions by introducing more examples (in~\cref{sub:examples}),
discussing relevant type checking (in~\cref{sub:preserve}) and the elimination principles (in~\cref{sub:cond-elim},
motivating coconditions~\cref{ssub:coconditions}),
and its applications with an additional primitive interval type (in~\cref{sub:cubical}) in~\cref{sec:examples}.
\item We formalize the core language (in~\cref{sub:core}), its reduction rules (in~\cref{sub:ops}),
its typing rules (in~\cref{sub:typing-terms}), the confluence-checking algorithm (in~\cref{sub:confl},
the soundness~\cref{thm:soundness} of the algorithm) of a type theory with conditions in~\cref{sec:formalization}.
In~\cref{rem:other-features}, we discuss an implementation method for indexed types
that does not depend on unification-based dependent pattern matching.
\item We discuss (co)conditions from a metatheoretical perspective (in~\cref{sec:metatheory}).
\item We conclude this paper by discussing its implementations (in~\cref{sub:impl}), related work
(in~\cref{sub:rel-work}) and potential future work (in~\cref{sub:future-work}) in~\cref{sec:conclusion}.
\end{itemize}

\addtocontents{toc}{\SkipTocEntry}
\subsection*{Acknowledgements} 

We are grateful to the anonymous reviewers from POPL, ICFP, and ESOP for their valuable comments and suggestions.
Since written, we have revised this paper a couple of times according to their feedback.

\section{Programming with conditions}%
\label{sec:examples}

In this section, we show some interesting examples of conditions, coconditions
(similar to the notion of conditions, but attached to record fields),
and their type checking process.

\subsection{Naturals with infinity} 
\label{sub:examples}
We could define natural numbers with an infinity value as an inductive type with conditions.
We think of infinity as a number whose successor is itself, so the natural number type with
an infinity value can be defined as an inductive type with the following constructors:

\begin{itemize}
\item A zero constructor and an infinity constructor.
\item A constructor generating the successor of a provided number.
\end{itemize}

A condition could be used to reflect the property of the infinity value,
leading to this definition:

\lessSpace{-1.2}
\begin{align*}
&\kwdata~\NatInf:\UU \\[-0.3em]
& \mid\zerocon \\[-0.3em]
& \mid \cons{∞} \\[-0.3em]
& \mid\succon~\NatInf~
  \kwwith~\cons{∞}⇒\cons{∞}
\end{align*}

The condition makes any successor of the infinity value be itself.
We could define the arithmetic operations like addition and multiplication in the obvious ways,
and prove some properties of $\NatInf$, such as the commutativity/associativity of addition and multiplication,
$(x:\NatInf) → ∞+x=∞$, $∞+∞=∞$.
The definitions and proofs are omitted.

A similar type in \CubicalAgda{} is available on GitHub\footnote{
  \url{https://github.com/agda/cubical/blob/master/Cubical/HITs/InfNat/Base.agda}},
but it has a propositional equality rather than a definitional equality.
Some proofs about the \CubicalAgda{} version of this type\footnote{
  \url{https://github.com/agda/cubical/blob/master/Cubical/HITs/InfNat/Properties.agda}}
require Kan operations, but the equivalent theorems for our $\NatInf$ does not need them
as we do not have (syntactically nontrivial) paths in $\NatInf$ at all.

Conatural numbers~\cite[Appendix]{Copat} --- the coinductive version of the natural number type ---
is another natural number type that allows an infinity value.
The idea is that we define a conatural not by its \textit{canonical value},
but by the \textit{action} to find its predecessor.
So, zero is defined as a conatural who does not have a predecessor,
three is defined as a conatural whose predecessor is two,
and infinity is defined as a conatural whose predecessor is itself.

In Agda, the definition of conatural numbers consists of two mutually recursive types
\AgdaDatatype{Conat} and \AgdaDatatype{$\eta$Conat} (note that the coloring of Agda
snippets (types and functions are in blue, constructors are in green, and fields are in pink)
is different from the coloring of other snippets (types are in green, functions are in blue,
constructors are in purple, and fields are in aquamarine)):

\begin{code}[hide]%
\>[0]\AgdaSymbol{\{-\#}\AgdaSpace{}%
\AgdaKeyword{OPTIONS}\AgdaSpace{}%
\AgdaPragma{--cubical}\AgdaSpace{}%
\AgdaPragma{--guardedness}\AgdaSpace{}%
\AgdaSymbol{\#-\}}\<%
\\
\>[0]\AgdaKeyword{open}\AgdaSpace{}%
\AgdaKeyword{import}\AgdaSpace{}%
\AgdaModule{Cubical.Core.Everything}\<%
\\
\>[0]\AgdaKeyword{mutual}\<%
\end{code}

\begin{figure}[h]
\begin{minipage}{0.5\linewidth}
\centering
\begin{code}%
\>[0][@{}l@{\AgdaIndent{1}}]%
\>[2]\AgdaKeyword{data}\AgdaSpace{}%
\AgdaDatatype{μConat}\AgdaSpace{}%
\AgdaSymbol{:}\AgdaSpace{}%
\AgdaPrimitive{Type₀}\AgdaSpace{}%
\AgdaKeyword{where}\<%
\\
\>[2][@{}l@{\AgdaIndent{0}}]%
\>[4]\AgdaInductiveConstructor{zero}%
\>[10]\AgdaSymbol{:}\AgdaSpace{}%
\AgdaDatatype{μConat}\<%
\\
\>[4]\AgdaInductiveConstructor{suc}%
\>[10]\AgdaSymbol{:}\AgdaSpace{}%
\AgdaRecord{Conat}\AgdaSpace{}%
\AgdaSymbol{→}\AgdaSpace{}%
\AgdaDatatype{μConat}\<%
\end{code}
\end{minipage}\begin{minipage}{0.5\linewidth}
\centering
\begin{code}%
\>[2]\AgdaKeyword{record}\AgdaSpace{}%
\AgdaRecord{Conat}\AgdaSpace{}%
\AgdaSymbol{:}\AgdaSpace{}%
\AgdaPrimitive{Type₀}\AgdaSpace{}%
\AgdaKeyword{where}\<%
\\
\>[2][@{}l@{\AgdaIndent{0}}]%
\>[4]\AgdaKeyword{coinductive}\<%
\\
\>[4]\AgdaKeyword{field}\AgdaSpace{}%
\AgdaField{force}\AgdaSpace{}%
\AgdaSymbol{:}\AgdaSpace{}%
\AgdaDatatype{μConat}\<%
\end{code}
\end{minipage}
\end{figure}
\lessSpace{-0.8}

\begin{code}[hide]%
\>[0]\AgdaKeyword{open}\AgdaSpace{}%
\AgdaModule{Conat}\<%
\end{code}

In the definition of \AgdaDatatype{Conat}, the predecessor action is called \AgdaField{force}.
The infinity value of the \AgdaDatatype{Conat} type looks like the following:

\lessSpace{-0.8}
\begin{center}
\begin{code}%
\>[0]\AgdaFunction{∞}\AgdaSpace{}%
\AgdaSymbol{:}\AgdaSpace{}%
\AgdaRecord{Conat}\<%
\\
\>[0]\AgdaFunction{∞}\AgdaSpace{}%
\AgdaSymbol{.}\AgdaField{force}\AgdaSpace{}%
\AgdaSymbol{=}\AgdaSpace{}%
\AgdaInductiveConstructor{suc}\AgdaSpace{}%
\AgdaFunction{∞}\<%
\end{code}
\end{center}
\lessSpace{-0.8}

However, we cannot inspect if a given conatural number is equal to \AgdaFunction{$∞$},
as the decision will require finding the predecessor of a given conatural number infinity times.
With the definition based on conditions, we can use pattern matching to determine if a
given instance of $\NatInf$ is $\cons{∞}$.


\subsection{Preservation of definitional equalities} 
\label{sub:preserve}
We require the definitional equalities to be preserved in functions.
That is to say, given terms $f, a, b$ where $a$ and $b$ are definitionally equal,
we require that $f~a$ and $f~b$ are also definitionally equal.
This requirement is relevant to the \textit{confluence} property of type theories,
which says that the choice of reduction strategy does not affect the reduction result.
The presence of conditions introduces user-defined definitional equalities between terms that
are distinct in other type theories, so we need to make sure that such
terms are indistinguishable under definitional equalities.

We can rephrase the ``preservation of definitional equalities'' property in the following way:
for an inductive type with conditions $\Dat D$, its instance $a$ which may reduce to $b$ due to
conditions, and a function $f$ from $\Dat D$, we have $f~a ≡ f~b$ a definitional equality.

\begin{example}
Consider the predecessor function of integers $\func{pred} : \Int→\Int$ for $\Int$
(we define functions using \textit{case trees} to avoid dealing with nested pattern matchings
and first-match semantics, provided the existence of a well-developed algorithm~\cite{DepPM}
translating general dependent pattern matchings to case trees):

\begin{center}
$\func{pred}~x ⇒ \kwcase_x \left\{
\begin{array}{l}
  \negcon~n⇒ \negcon~(\succon~n)\\
  \poscon~n⇒ \kwcase_n \left\{
  \begin{array}{l}
    \zerocon⇒ \negcon~(\succon~\zerocon)\\
    \succon~m⇒ \poscon~m
  \end{array}
  \right\}
\end{array}
\right\}$
\end{center}

Since $\negcon~\zerocon$ and $\poscon~\zerocon$ are definitionally equal,
we need to make sure that $\func{pred}~(\negcon~\zerocon)$ and $\func{pred}~(\poscon~\zerocon)$
are definitionally equal, too.
The type checker should perform a conversion check between these two values.
In the $\func{pred}$ case, both of them reduce to $\poscon~(\succon~\zerocon)$,
so this function is accepted.
\end{example}

Such conversion checks will be performed after the type checks and termination checks
for every function from an inductive type with conditions.
We will refer to this conversion check as the \textit{confluence check}.


\subsection{Elimination principles} 
\label{sub:cond-elim}
Each inductive type definition generates several typing rules around it.
The definition of $\Nat$ in~\cref{sub:intro-int} generates the following typing rules,
assuming the existence of structurally recursive functions:

\lessSpace{-1.2}
\begin{mathpar}
\inferrule{Γ⊢}{Γ⊢\zerocon:\Nat}\textsc{IntroZero} \and
\inferrule{Γ⊢ n:\Nat}{Γ⊢\succon~n:\Nat}\textsc{IntroSuc} \and
\inferrule{Γ,y:\Nat⊢ A:\UU \\ Γ,y:\Nat⊢ a:A[\zerocon/y] \\
 Γ,y:\Nat,x:\Nat ⊢ b:A[\succon~x/y]}
{Γ, y:\Nat ⊢\kwcase_y {\left\{
\begin{array}{l}
  \zerocon⇒ a \\
  \succon~x⇒ b
\end{array}
\right\}} : A}\textsc{ElimNat}
\end{mathpar}

We could \textit{internalize} all of these rules as functions in a dependent type theory
by taking the conclusion of each rule as the function body and the premises as its parameters.
The \textsc{ElimNat} rule could be internalized as:

\lessSpace{-1.2}
\begin{align*}
&\func{elimNat} : (A:\Nat→\UU)~(a:A~\zerocon)~(b:(x:\Nat)→ A~(\succon~x))~(y:\Nat)→ A~y \\
&\func{elimNat}~A~a~b~y ⇒ \kwcase_y \left\{
\begin{array}{l}
  \zerocon⇒ a \\
  \succon~x⇒ b~x
\end{array}
\right\}
\end{align*}

The general case is also possible since the premises are always representable as parameters.
However, the presence of conditions will break this situation due to the definitional equality
requirement discussed in~\cref{sub:preserve}.
The definition of $\Int$ in~\cref{sub:intro-int} generates the following rules:

\lessSpace{-1.2}
\begin{mathpar}
\inferrule{Γ⊢ n:\Nat}{Γ⊢\poscon~n:\Int}\textsc{IntroPos} \and
\inferrule{Γ⊢ n:\Nat}{Γ⊢\negcon~n:\Int}\textsc{IntroNeg} \and
\inferrule{Γ,y:\Int⊢ A:\UU \\
 Γ,y:\Int,x:\Nat ⊢ a:A[\poscon~x/y] \\\\
 Γ,y:\Int,x:\Nat ⊢ b:A[\negcon~x/y] \\
 Γ ⊢ a[\zerocon/x] ≡ b[\zerocon/x]}
{Γ, y:\Int ⊢\kwcase_y {\left\{
\begin{array}{l}
  \poscon~x⇒ a \\
  \negcon~x⇒ b
\end{array}
\right\}} : A}\textsc{ElimInt}
\end{mathpar}

The definitional equality requirement, \fbox{$Γ ⊢ a[\zerocon/x] ≡ b[\zerocon/x]$},
is not directly expressible with the intensional equality type\footnote{The extensional equality type
breaks the decidability of type checking, and is too general.}.
Fortunately, we do not need the full-blown strict equality type to encode the definitional equality
introduced by the conditions. Instead, we use \textit{coconditions}.

\subsubsection{Records with coconditions.} 
\label{ssub:coconditions}
Coconditions are partial pattern matchings in record fields. The duality makes sense in the following ways:

\begin{itemize}
\item Conditions add user-defined reductions to the introduction rules (also known as constructors) of inductive types,
 while coconditions add them to the elimination rules (also knowns as fields) of records.
\item The definitional equalities defined by conditions are preserved in the elimination of inductive types,
 while those defined by coconditions are preserved in the introduction of records.
\end{itemize}

\begin{example}
We exemplify coconditions by constructing the elimination principle of $\Int$.
Consider the following record, where the field $\elim{posElim}$ corresponds to the term
$a$ in \textsc{ElimInt} and $\elim{negElim}$ corresponds to $b$:

\lessSpace{-1.5}
\begin{align*}
&\kw{record}~\Dat{IntElimData}~(A:\Int→\UU):\UU \\[-0.3em]
& \mid \elim{posElim}:(x:\Nat)→ A~(\poscon~x) \\[-0.3em]
& \mid \elim{negElim}:(x:\Nat)→ A~(\negcon~x)~\kwwith~\zerocon ⇒ \elim{posElim}~\zerocon
\end{align*}

The cocondition $\kwwith~\zerocon ⇒ \elim{posElim}~\zerocon$ requires
each instance $k$ of $\Dat{IntElimData}$ to have the definitional equality between
$k.\elim{posElim}~\zerocon$ and $k.\elim{negElim}~\zerocon$.
\end{example}

Now, we can internalize the elimination principle of $\Int$ with $\Dat{IntElimData}$:

\lessSpace{-1.5}
\begin{align*}
&\func{elimInt} : (A:\Int→\UU)~(k:\Dat{IntElimData}~A)~(y:\Int)→ A~y \\
&\func{elimInt}~A~k ⇒ \kwcase_y \left\{
\begin{array}{l}
  \poscon~x⇒ k.\elim{posElim}~x \\
  \negcon~x⇒ k.\elim{negElim}~x
\end{array}
\right\}
\end{align*}

The definitional equality between $(k.\elim{posElim}~x)[\zerocon/x]$ and
$(k.\elim{negElim}~x)[\zerocon/x]$ is exactly the cocondition in $\Dat{IntElimData}$.
On the other hand, we could internalize the introduction rules of records with coconditions
using inductive type with conditions. For instance, the introduction rule of $\Dat{IntElimData}$
could be defined using $\Int$. The definition is omitted.

\subsection{Path type and HITs.} 
\label{sub:cubical}
The most interesting application of conditions and coconditions is to simulate the cubical syntax of
the path type and higher inductive types in homotopy type theory.
We will need a primitive type $\II$ (the \textit{interval} type
inspired by cubical type theories),
and we will be able to define some interesting types with it.

\subsubsection{The interval type $\II$.} 
\label{ssub:interval}
Intuitively, this type corresponds to the closed interval $[0,1]\in\mathbb R$,
with a constructor $\lcon$ being the left endpoint $0$
and another constructor $\rcon$ being the right endpoint $1$.
Since closed intervals are path-connected\footnote{We have not defined $\Path$ yet.
We are only talking about the homotopical intuition.}, they are homotopically equivalent
to single-point spaces. Type theoretically speaking, $\II$,
while having two \textit{distinct} constructors, has to be indistinguishable from the \textit{unit type}.
To achieve this, we take advantage of the fact that $\II$ is not an inductive type ---
as long as there is no ``pattern matching'' for $\II$, it will not become the boolean type.
The fact that $\II$ is the unit type is reflected in its elimination rule.
There are many different versions of the elimination rule for $\II$
in various type theories.

For simplicity, we choose the following definition used
in the Arend proof assistant~\cite{Arend} (the underlying type theory
is known as HoTT-I~\cite{HoTT-I}),
which is similar but different from cubical type theories:
\lessSpace{-0.5}
\begin{mathpar}
\inferrule{Γ⊢ A:\II→\UU \\ Γ⊢ i:\II \\ Γ⊢ a:A~\lcon}
{Γ⊢\coe_A^i(a):A~i}\textsc{ElimI}
\end{mathpar}
It comes with the following reduction rules\footnote{The name \textsc{Regularity} is inspired
from the definition in~\cite[Specification 4.28]{XTT} and the
\textsc{Coercion Regularity} rule in~\cite[\S 2.1.1]{XTT}.
This name is used earlier in~\cite[Acknowledgements]{CCHM}.}:
\lessSpace{-0.5}
\begin{mathpar}
\inferrule{Γ⊢ A:\UU \\ Γ⊢ a:A}{Γ⊢ \coe_{\lam i A}^i(a) ≡ a}\textsc{Regularity} \and
\inferrule{Γ⊢ A:\II→\UU \\ Γ⊢ a:A}{Γ⊢ \coe_A^\lcon(a) ≡ a}\textsc{CoeLeft}
\end{mathpar}

The semantical interpretation of this elimination rule is elaborated in~\cite[\S 2]{HoTT-I}.
In this paper, we only need an intuitive understanding of $\coe$:
Since $\II$ is the unit type, families indexed by $\II$ should be constant.
So, given any such family $A$, the instances of $A~\lcon$ can be coerced to $A~i$ for any $i$.
Cartesian cubical type theory~\cite[\S 1.2]{CCTT} defines $\coe$ with a more general type
(we have modified some syntactical details to fit the convention of
this paper in order to make the type differences clear), allowing coercion between arbitrary intervals:
\begin{mathpar}
\inferrule{Γ⊢ A:(r:\II)→\UU \\
Γ⊢ r_1:\II \\ Γ⊢ r_2:\II \\ Γ⊢ a:A~r_1}{Γ⊢\coe_A^{r_1\rightsquigarrow r_2}(a):A~r_2}
\end{mathpar}

Some variants of cubical type theory define more operators such as
$\neg,\wedge,\vee$~\cite{CCHM}, but these operators can be defined in terms of $\coe$
using the constructions in~\cite[Appendix 1]{CubicalSets}.
In this paper, we will use the $\squeeze{}{}$ operator. Intuitively,
it finds the interval closer to $\lcon$ (the ``minimum'' of its arguments).
Type theoretically, it satisfies the following computation rules:
\begin{align*}
\squeeze\lcon i &≡ \lcon & \squeeze i\lcon &≡ \lcon \\
\squeeze\rcon i &≡ i & \squeeze i\rcon &≡ i
\end{align*}

\subsubsection{The $\Path$ type.} 
\label{ssub:path}
This type corresponds to the notion of \textit{path space} in topology.
In topology, a path in space $X$ is defined as a continuous map from the closed unit interval $[0,1]$
(which we already defined in~\cref{ssub:interval}) to $X$, so we could imitate this definition in a type theory.
The typing and computation rules of $\Path$ are listed below
(the mapping corresponds to the $t$ variable in the introduction rule \textsc{PathI}).
Note that this is not needed by our type theory, we are instead going to show its encoding in terms of coconditions.
\begin{mathpar}
\inferrule{Γ⊢ A:\II→\UU \\ Γ⊢ a:A~\lcon \\ Γ⊢ b:A~\rcon}
{Γ⊢ \Path~A~a~b:\UU}\textsc{PathForm} \and
\inferrule{Γ⊢ p:\Path~A~a~b \\ Γ⊢ i:\II}{Γ⊢ p~i:A~i}\textsc{PathE} \and
\inferrule{Γ, i:\II⊢ t:A~i}
{Γ⊢\plam i t :\Path~A~(t[\lcon/i])~(t[\rcon/i])}\textsc{PathI} \and
\inferrule{Γ⊢ p:\Path~A~a~b}{Γ⊢ p~\pappSym~\lcon ≡ a}\textsc{PathL} \and
\inferrule{Γ⊢ p:\Path~A~a~b}{Γ⊢ p~\pappSym~\rcon ≡ b}\textsc{PathR} \and
\inferrule{Γ, i:\II⊢ t:A~i \\ Γ⊢ j:\II}{Γ⊢ (\plam i t)~\pappSym~j ≡ t[j/i]}\textsc{PathAp}
\end{mathpar}
An instance of $\Path~A~a~b$ is similar to a wrapper of a function of type $\II→ A$, 
satisfying the definitional equalities in the conclusions of \textsc{PathL} and \textsc{PathR}.
Before showing the encoding of $\Path$ in terms of coconditions, we first show an encoding of
the intensional $\Dat{Id}$ type in terms of $\Path$.\footnote{
  We assume that the intensional $\Dat{Id}$ type is well-known in the literature,
  but not the $\Path$ type. So we offer simple examples to help intuiting the idea.}
\begin{itemize}
\item Formation. We define \fbox{$\Dat{Id}~A~a~b$} as \fbox{$\Path~(\lam i A)~a~b$}.
\item Introduction. We define \fbox{$\cons{refl}_a:\Dat{Id}~A~a~a$} as \fbox{$\plam i a:\Path~(\lam i A)~a~a$}.
\item Elimination. We assume the following version of the $\func J$ rule:
\lessSpace{-0.5}
\begin{mathpar}
\inferrule{Γ⊢ A:\UU \\ Γ⊢ x:A \\ Γ⊢ y:A \\\\
 Γ⊢ P:(y':A)→ \Dat{Id}~A~x~y'→\UU \\
Γ⊢ p:P~x~\cons{refl}_x \\ Γ⊢ q:\Dat{Id}~A~x~y}
{Γ⊢ \func J~P~p~q:P~y~q}
\end{mathpar}
We define \fbox{$\func J~P~p~q$} as
\fbox{$\coe_{\lam i{P~(q~\pappSym~i)~(\plam j {q~\pappSym~\squeeze i j})}}^\rcon(p)$}.
\end{itemize}
\begin{lem}[$J\beta$]
\label{lem:regularity}
The $\Path$ encoding satisfies the computation
$\func J~P~p~\cons{refl}_x \mapsto p$.
\end{lem}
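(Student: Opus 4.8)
The plan is to unfold the two definitions involved and then chase the computation rules for $\Path$ and $\coe$. Recall that $\cons{refl}_x$ is by definition $\plam i x$ and that $\func J~P~p~q$ is by definition $\coe_{\lam i{P~(q~\pappSym~i)~(\plam j {q~\pappSym~\squeeze i j})}}^\rcon(p)$. Substituting $q := \cons{refl}_x$ (renaming the bound interval variable of $\cons{refl}_x$ to $k$ so it does not clash with the $i$ and $j$ of the $\coe$ annotation) yields
\[
  \func J~P~p~\cons{refl}_x \;=\; \coe_{\lam i{\,P~((\plam k x)~\pappSym~i)~(\plam j{(\plam k x)~\pappSym~\squeeze i j})\,}}^\rcon(p),
\]
and the whole job is to reduce the right-hand side to $p$.

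First I would reduce the two path applications. Because $x$ does not mention $k$, \textsc{PathAp} gives $(\plam k x)~\pappSym~r \mapsto x[r/k] = x$ for \emph{every} interval term $r$; in particular it fires with $r := i$ and with $r := \squeeze i j$. Rewriting underneath the binders $\lam i{-}$ and $\plam j{-}$ — using that reduction is a congruence of terms — the type annotation of the $\coe$ collapses to $\lam i{P~x~(\plam j x)}$, and $\plam j x$ is once again literally $\cons{refl}_x$. So the term reduces to $\coe_{\lam i{P~x~\cons{refl}_x}}^\rcon(p)$. Now I would observe that this family is constant: none of $P$, $x$, or $\cons{refl}_x$ mentions the bound variable $i$. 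Hence \textsc{Regularity} applies and $\coe_{\lam i{P~x~\cons{refl}_x}}^\rcon(p) \mapsto p$, which finishes the reduction $\func J~P~p~\cons{refl}_x \mapsto p$.

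The only step that deserves attention is the use of \textsc{Regularity}: the instance we need has coercion target $\rcon$, whereas the rule as displayed in \cref{ssub:interval} is written with target the bound variable $i$. I read that rule as saying that coercing along a constant family is the identity regardless of where one coerces to, so it applies here directly; if one insists on the literal ``diagonal'' reading, one first obtains $\coe_{\lam i{P~x~\cons{refl}_x}}^i(p) \equiv p$ and then instantiates $i := \rcon$ (this is the same manoeuvre one would use with \textsc{CoeLeft} if the target were $\lcon$). Apart from clarifying that point, the argument is entirely mechanical: the mathematical content of the lemma is exactly the two applications of \textsc{PathAp} and the one application of \textsc{Regularity}, plus the bookkeeping of rewriting under the $\plam j{-}$ binder inside the type annotation of $\coe$ and keeping the renamed interval variables straight so that $\cons{refl}_x$'s body is genuinely closed in $k$.
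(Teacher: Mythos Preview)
Your proof is correct and follows exactly the same route as the paper: expand $\func J$ and $\cons{refl}_x$, apply \textsc{PathAp} to collapse both path applications, and finish with \textsc{Regularity} on the resulting constant family. Your additional remark about reading \textsc{Regularity} at target $\rcon$ rather than the literal bound variable is a reasonable clarification the paper leaves implicit.
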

\begin{proof}
\lessSpace{-1.5}
\begin{align*}
 & \func J~P~p~\cons{refl}_x && \text{LHS} \\
 \mapsto~&\coe_{\lam i{P~(\cons{refl}_x~\pappSym~i)~(\plam j {\cons{refl}_x~\pappSym~\squeeze i j})}}^\rcon(p)
  && \text{expansion of}~\func J \\
 \mapsto~&\coe_{\lam i{P~((\plam j x)~\pappSym~i)~(\plam j {(\plam k x)~\pappSym~\squeeze i j})}}^\rcon(p)
  && \text{expansion of}~\cons{refl}_x \\
 \mapsto~&\coe_{\lam i{P~x~(\plam j x)}}^\rcon(p) && \text{by}~\textsc{PathAp} \\
 \mapsto~&p && \text{by}~\textsc{Regularity}
\end{align*}
\lessSpace{-1.2}
\end{proof}
\begin{cor}
\label{cor:path-id}
The intensional identity type has a sound encoding using the $\Path$ type.
\end{cor}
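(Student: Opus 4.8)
The plan is to read ``sound encoding'' as four obligations --- that the translation validates the formation rule, the constructor $\cons{refl}$, the eliminator $\func J$, and the computation rule $\func J~P~p~\cons{refl}_x \mapsto p$ --- and to observe that the last is exactly \cref{lem:regularity} while the other three reduce to routine type-checking against the rules of $\Path$ and $\coe$.

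First I would handle the two easy rules. For formation, applying \textsc{PathForm} to the constant family $\lam i A$ gives $\Path~(\lam i A)~a~b : \UU$; since $(\lam i A)~\lcon \equiv A \equiv (\lam i A)~\rcon$, the hypotheses $a,b : A$ inhabit the required endpoint types, so $\Dat{Id}~A~a~b$ is well-formed. For introduction, weakening $a:A$ under a fresh $i:\II$ and applying \textsc{PathI} gives $\plam i a : \Path~(\lam i A)~(a[\lcon/i])~(a[\rcon/i])$; because $a$ does not mention $i$, both endpoints reduce to $a$, so $\cons{refl}_a : \Dat{Id}~A~a~a$.

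The step that needs real care is showing $\func J~P~p~q : P~y~q$. Writing $B := \lam i{P~(q~\pappSym~i)~(\plam j{q~\pappSym~\squeeze i j})}$, I would first check $B : \II→\UU$: for any $i$ we have $q~\pappSym~i : A$ since the family of $q$ is constant, and $\plam j{q~\pappSym~\squeeze i j}$ has endpoints $q~\pappSym~\squeeze i\lcon \equiv q~\pappSym~\lcon \equiv x$ and $q~\pappSym~\squeeze i\rcon \equiv q~\pappSym~i$ by the computation rules for $\squeeze{}{}$ together with \textsc{PathL}, hence it inhabits $\Dat{Id}~A~x~(q~\pappSym~i)$ and $P$ applies. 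Then I would compute the endpoints of $B$ itself. At $\lcon$: $\squeeze\lcon j \equiv \lcon$ and $q~\pappSym~\lcon \equiv x$, so $\plam j{q~\pappSym~\squeeze\lcon j} \equiv \plam j x$, which is $\cons{refl}_x$ by definition; hence $B~\lcon \equiv P~x~\cons{refl}_x$, so $p : B~\lcon$ and \textsc{ElimI} produces $\coe_B^\rcon(p) : B~\rcon$. At $\rcon$: $\squeeze\rcon j \equiv j$ and $q~\pappSym~\rcon \equiv y$, so $\plam j{q~\pappSym~\squeeze\rcon j} \equiv \plam j{q~\pappSym~j}$, and the $\eta$-law $\plam j{q~\pappSym~j} \equiv q$ gives $B~\rcon \equiv P~y~q$. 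So the eliminator has the stated type, and with \cref{lem:regularity} for the $\beta$-rule the encoding is sound.

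I expect the main obstacle --- indeed the only non-mechanical point --- to be that final endpoint computation $B~\rcon \equiv P~y~q$, which hinges on the judgmental $\eta$-rule $\plam j{q~\pappSym~j} \equiv q$ for $\Path$. This is exactly what the coconditions encoding of $\Path$ supplies: a path is, up to $\eta$, a record wrapping its own application at a point. Without that $\eta$-rule, $\coe_B^\rcon(p)$ would only inhabit a type propositionally (not definitionally) equal to $P~y~q$, and one would have to insert a transport --- so the sharpness of this encoding is precisely what the cocondition buys.
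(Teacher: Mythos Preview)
Your proposal matches the paper's approach: the corollary is stated in the paper with no proof at all, immediately after \cref{lem:regularity}, so the intended argument is exactly yours --- the encoding of formation, introduction, and elimination is \emph{defined} in the preceding bullets, and the only nontrivial obligation (the $\beta$-rule) is \cref{lem:regularity}. You simply fill in the type-checking details the paper leaves implicit.

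One caution on your final paragraph. Your resolution of the $\eta$-issue by appealing to the cocondition encoding is slightly anachronistic and optimistic. At the point where \cref{cor:path-id} is stated, $\Path$ is still a primitive governed by the rules \textsc{PathForm}, \textsc{PathE}, \textsc{PathI}, \textsc{PathL}, \textsc{PathR}, \textsc{PathAp}, none of which is an $\eta$-rule; the cocondition encoding only appears in the \emph{following} subsection, and the formalization later explicitly remarks that $\eta$-conversion is omitted. So the paper itself glosses over precisely the point you flag: without $\eta$ one obtains $\coe_B^\rcon(p) : P~y~(\plam j{q~\pappSym~j})$ rather than $P~y~q$. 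Your diagnosis that some form of $\eta$ for $\Path$ (or a correspondingly relaxed target type for $\func J$) is needed is correct and sharper than anything the paper says; just be aware that the paper does not actually supply the $\eta$-rule you invoke.
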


\subsubsection{$\Path$ is a record with coconditions.} 
\label{ssub:path-cocond}
Observing the rules in~\cref{ssub:path}, we may realize that
$\Path$ is very similar to a one-field record with coconditions.
We mentioned before that paths are similar to wrappers of functions from $\II$,
and we can realize this claim with coconditions. Below is a pseudo definition:
\lessSpace{-0.5}
\begin{align*}
&\kw{record}~\Path~(A:\II→\UU)~(a:A~\lcon)~(b:A~\rcon):\UU \\[-0.3em]
& \mid \pappSym:(i:\II)→ A~i~\kwwith{\left\{\begin{array}{l}
  \lcon⇒ a \\
  \rcon⇒ b
 \end{array}\right\}}
\end{align*}
The above definition allows $\plam i a$ to be written as $\kw{record}~\set{\pappSym⇒ (\lam i a)}$
and path application $p~\pappSym~i$ to be written as $\papp p i$.
For convenience, we will denote \fbox{$\Path~(\lam i A)~a~b$} as \fbox{$a=_A b$}
when $i$ is not free in $A$. We also define the following auxiliary function:
\begin{align*}
&\pcon:(f:(i:\II)→ A~i) : \Path~A~(f~\lcon)~(f~\rcon) \\
&\pcon~f⇒ \kw{record}~\set{\pappSym ⇒ f}
\end{align*}
The definition of $\Path$ with coconditions is a pseudo definition because it uses pattern matching on $\II$
which does not exist as mentioned in~\cref{ssub:interval}.
To make sense of this definition, we create a ``backdoor'' for pattern matching
on $\II$ in (co)conditions.

The essential reason why there is no pattern matching on $\II$ is that the closed interval $[0,1]$
in $\mathbb R$ has infinitely many points, so $\II$ has infinitely many instances.
We can imagine a ``pattern matching'' for $\II$ with three cases --- for $0$, $1$ and the open interval $(0,1)$.
This pattern matching is not expressible in type theory, but we can have the \textit{partial} version of it.
In other words, the cocondition $\kwwith{\left\{\begin{array}{l}
  \lcon⇒ a \\
  \rcon⇒ b
\end{array}\right\}}$ is just a partial function of type $(i:\II)→ A~i$,
where the case for the open interval $(0,1)$ is missing.

Up to now, we have a dependent type theory with (co)conditions and four
primitives $\II, \lcon, \rcon, \coe$ and pattern matching for $\II$ in conditions,
which is powerful enough to encode the intensional $\Dat{Id}$ type.
The $\Path$ type is more general than the $\Dat{Id}$ type,
as the extensionality principle of the function type holds:
\begin{align*}
&\func{funExt}:((a:A)→ f~a=_B g~a)→ f=_{A→ B} g \\
&\func{funExt}~p⇒{\pconp{\lam{i~a}{\papp{(p~a)}i}}}
\end{align*}
We can also prove the congruence lemma without using $\func J$:
\begin{align*}
&\func{pmap}:(f:A→ B)→ a=_A b→ f~a=_B f~b \\
&\func{pmap}~f~p⇒{\pconp{\lam{i}{f~(\papp p i)}}}
\end{align*}

\subsubsection{Cubical syntax of HITs.} 
\label{ssub:cits}
We refer to the cubical syntax for HITs in homotopy type theory as ``CITs''.
HITs are inductive types with the following properties:
\begin{itemize}
\item They have constructors.
Point constructors are similar to ordinary inductive types, while path constructors
specify nontrivial identity relations of the instances of HITs.
\item They have elimination principles, mapping the points and paths to other spaces.
\item The path constructors must be preserved in the elimination
because homotopy type theory wants every function to be continuous.
\end{itemize}
Recall~\cref{ex:circle}. There are two constructors for $\Sph1$,
a point constructor $\basecon$ and a path constructor $\loopcon$.
All functions from $\Sph1$ must map the $\loopcon$ constructor to a loop in the codomain space.
The \CubicalAgda{} definition of $\Sph1$ looks like the following:

\begin{code}[hide]%
\>[0]\AgdaSymbol{\{-\#}\AgdaSpace{}%
\AgdaKeyword{OPTIONS}\AgdaSpace{}%
\AgdaPragma{--cubical}\AgdaSpace{}%
\AgdaSymbol{\#-\}}\<%
\\
\>[0]\AgdaKeyword{open}\AgdaSpace{}%
\AgdaKeyword{import}\AgdaSpace{}%
\AgdaModule{Cubical.Core.Everything}\<%
\\
\>[0]\AgdaKeyword{variable}\AgdaSpace{}%
\AgdaGeneralizable{A}\AgdaSpace{}%
\AgdaGeneralizable{B}\AgdaSpace{}%
\AgdaSymbol{:}\AgdaSpace{}%
\AgdaPrimitive{Set}\<%
\end{code}

\lessSpace{-0.8}
\begin{center}
\begin{code}%
\>[0]\AgdaKeyword{data}\AgdaSpace{}%
\AgdaDatatype{ℝ¹}\AgdaSpace{}%
\AgdaSymbol{:}\AgdaSpace{}%
\AgdaPrimitive{Type₀}\AgdaSpace{}%
\AgdaKeyword{where}\<%
\\
\>[0][@{}l@{\AgdaIndent{0}}]%
\>[2]\AgdaInductiveConstructor{base}%
\>[8]\AgdaSymbol{:}\AgdaSpace{}%
\AgdaDatatype{ℝ¹}\<%
\\
\>[2]\AgdaInductiveConstructor{loop}%
\>[8]\AgdaSymbol{:}\AgdaSpace{}%
\AgdaInductiveConstructor{base}\AgdaSpace{}%
\AgdaOperator{\AgdaFunction{≡}}\AgdaSpace{}%
\AgdaInductiveConstructor{base}\<%
\end{code}
\end{center}
\lessSpace{-0.8}

During type checking, \CubicalAgda{} translates \AgdaInductiveConstructor{loop} into a constructor of type
$\II→\Sph1$ to make sense of it as an inductive constructor.
\textit{Boundaries} are also generated, containing the fact that \AgdaInductiveConstructor{loop} reduces
to \AgdaInductiveConstructor{base} when applied by $\lcon$ or $\rcon$.
In \CubicalAgda{}, the endpoints of the path constructors are denoted as $\AgdaInductiveConstructor{i0}$ and $\AgdaInductiveConstructor{i1}$.
The boundary information of \AgdaInductiveConstructor{loop} is denoted as
$\AgdaInductiveConstructor{loop}~\emptyset~[i\mid(\AgdaInductiveConstructor{base},\AgdaInductiveConstructor{base})]$
according to~\cite[Fig. 2]{CubicalAgda}.
The elimination of CITs can be accomplished simply by pattern matching thanks to the translation.
\CubicalAgda{} performs a \textit{boundary check} (\cite[\S 4.3, rule \textsc{CtSplitConHIT}]{CubicalAgda})
to ensure that pattern matching on CITs preserves the path constructors.
Taking \AgdaDatatype{$\mathbb S^1$} as an example, given a pattern matching $\left\{\begin{array}{l}
  \AgdaInductiveConstructor{base}⇒ a \\
  \AgdaInductiveConstructor{loop}~i⇒ b
\end{array}\right\}$, there must be two definitional equalities \fbox{$b[\AgdaInductiveConstructor{i0}/i]≡ a$} and
\fbox{$b[\AgdaInductiveConstructor{i1}/i]≡ a$}.

In short, \AgdaDatatype{$\mathbb S^1$} is an inductive type with two constructors:
$\AgdaInductiveConstructor{base}:\AgdaDatatype{$\mathbb S^1$}$ and
$\AgdaInductiveConstructor{loop}:\AgdaDatatype{I}→\AgdaDatatype{$\mathbb S^1$}$,
where the second constructor has some special reduction rules (the boundaries)
that must be preserved in pattern matching.
This naturally inspires the following definition with conditions:

\lessSpace{-1.3}
\begin{align*}
&\kwdata~\Sph1:\UU \\[-0.3em]
& \mid\basecon \\[-0.3em]
& \mid\loopcon~\II~\kwwith~\left\{
\begin{array}{l}
 \lcon⇒\basecon \\
 \rcon⇒\basecon
\end{array}\right\}
\end{align*}

It has all the properties of the \CubicalAgda{} version of \AgdaDatatype{$\mathbb S^1$},
including the reduction rules of the constructor \AgdaInductiveConstructor{loop}
and the preservation of these reduction rules in pattern matching.

We can observe the major difference between inductive types with conditions and CITs:
the former uses the computation rules to derive the endpoints,
and the latter uses the endpoints to derive computation rules.
The endpoints are described using a denser but harder to read syntax,
while computation rules are easier to read but slightly more boilerplate.
This difference is more pronounced when the dimension of the path constructors grows higher,
as the face constructor of $\Tori$. We would write out the computation rules directly
with conditions:

\lessSpace{-1.8}
\begin{align*}
&\kwdata~\Tori:\UU \\[-0.3em]
& \mid\pointcon \\[-0.3em]
& \mid\linecon 1~\II~\kwwith~\left\{\begin{array}{l}
  \lcon⇒\pointcon \\
  \rcon⇒\pointcon
\end{array}\right\} \\[-0.3em]
& \mid\linecon 2~\II~\kwwith~\left\{\begin{array}{l}
  \lcon⇒\pointcon \\
  \rcon⇒\pointcon
\end{array}\right\} \\[-0.3em]
& \mid\facecon~(i~j:\II)~\kwwith~\left\{\begin{array}{l}
 \lcon,i⇒\linecon 2~i \\
 \rcon,i⇒\linecon 2~i \\
 i,\lcon⇒\linecon 1~i \\
 i,\rcon⇒\linecon 1~i
\end{array}\right\}
\end{align*}

\begin{code}[hide]%
\>[0]\AgdaSymbol{\{-\#}\AgdaSpace{}%
\AgdaKeyword{OPTIONS}\AgdaSpace{}%
\AgdaPragma{--cubical}\AgdaSpace{}%
\AgdaSymbol{\#-\}}\<%
\\
\>[0]\AgdaKeyword{open}\AgdaSpace{}%
\AgdaKeyword{import}\AgdaSpace{}%
\AgdaModule{Cubical.Core.Everything}\<%
\\
\>[0]\AgdaKeyword{variable}\AgdaSpace{}%
\AgdaGeneralizable{A}\AgdaSpace{}%
\AgdaGeneralizable{B}\AgdaSpace{}%
\AgdaSymbol{:}\AgdaSpace{}%
\AgdaPrimitive{Set}\<%
\end{code}

It is slightly more difficult to tell the computation rules from
the \texttt{Cubical Agda} version of \AgdaInductiveConstructor{face}:

\lessSpace{-0.8}
\begin{center}
\begin{code}%
\>[0]\AgdaKeyword{data}\AgdaSpace{}%
\AgdaDatatype{Torus}\AgdaSpace{}%
\AgdaSymbol{:}\AgdaSpace{}%
\AgdaPrimitive{Type₀}\AgdaSpace{}%
\AgdaKeyword{where}\<%
\\
\>[0][@{}l@{\AgdaIndent{0}}]%
\>[2]\AgdaInductiveConstructor{point}%
\>[9]\AgdaSymbol{:}\AgdaSpace{}%
\AgdaDatatype{Torus}\<%
\\
\>[2]\AgdaInductiveConstructor{line1}%
\>[9]\AgdaSymbol{:}\AgdaSpace{}%
\AgdaInductiveConstructor{point}\AgdaSpace{}%
\AgdaOperator{\AgdaFunction{≡}}\AgdaSpace{}%
\AgdaInductiveConstructor{point}\<%
\\
\>[2]\AgdaInductiveConstructor{line2}%
\>[9]\AgdaSymbol{:}\AgdaSpace{}%
\AgdaInductiveConstructor{point}\AgdaSpace{}%
\AgdaOperator{\AgdaFunction{≡}}\AgdaSpace{}%
\AgdaInductiveConstructor{point}\<%
\\
\>[2]\AgdaInductiveConstructor{face}%
\>[9]\AgdaSymbol{:}\AgdaSpace{}%
\AgdaPostulate{PathP}\AgdaSpace{}%
\AgdaSymbol{(λ}\AgdaSpace{}%
\AgdaBound{i}\AgdaSpace{}%
\AgdaSymbol{→}\AgdaSpace{}%
\AgdaInductiveConstructor{line1}\AgdaSpace{}%
\AgdaBound{i}\AgdaSpace{}%
\AgdaOperator{\AgdaFunction{≡}}\AgdaSpace{}%
\AgdaInductiveConstructor{line1}\AgdaSpace{}%
\AgdaBound{i}\AgdaSymbol{)}\AgdaSpace{}%
\AgdaInductiveConstructor{line2}\AgdaSpace{}%
\AgdaInductiveConstructor{line2}\<%
\end{code}
\end{center}
\lessSpace{-0.8}

However, the definition based on conditions is almost twice as long as the CIT definition.
Despite the syntactical difference, the two versions can be used to replicate each other.

\section{Formalization of (co)conditions}%
\label{sec:formalization}
In this section, we will show how inductive types with conditions are type-checked.
We define the rules for type-checking inductive type definitions
and confluence-checking for pattern matching on inductive types with conditions.
The type-checking of records with coconditions and the confluence-checking of
their introduction are omitted from this paper because they are too similar to inductive types with conditions.

\subsection{Core language syntax} 
\label{sub:core}
The term language in our type theory is presented in~\cref{fig:syntax}.

\begin{figure}[h!]
\begin{align*}
  x,y,z,i,j ::= & && \text{variable names} \\[-0.3em]
  p, q    ::= & \quad \cons c~\overline x ⇒ u && \text{constructor clause} \\[-0.3em]
  \sigma,\gamma ::= & \quad [\overline u/\overline x]  && \text{substitutions} \\[-0.3em]
  A,B,a,b,u,v ::= & \quad \func f && \text{function invocation} \\[-0.3em]
         \mid & \quad \Dat D~\overline u && \text{fully applied inductive type} \\[-0.3em]
         \mid & \quad \cons c~\overline u && \text{fully applied constructor} \\[-0.3em]
         \mid & \quad v~u && \text{application to non-definitions} \\[-0.3em]
         \mid & \quad \kwcase_x~\set{\overline p} && \text{case-split} \\[-0.3em]
         \mid & \quad (x:A)→ B && \text{$\Pi$-type} \\[-0.3em]
         \mid & \quad \II && \text{interval type} \\[-0.3em]
         \mid & \quad \coe && \text{interval elimination} \\[-0.3em]
         \mid & \quad \rcon &&  \\[-0.3em]
         \mid & \quad \lcon &&  \\[-0.3em]
         \mid & \quad \UU && \text{universe}
\end{align*}
\caption{Syntax of terms}
\label{fig:syntax}
\end{figure}

\begin{notation}
We borrow some notational convention from~\cite{SIT}:
A list of $u$ is written as $\overline u$ or $\emptyset$ when the list is empty.
Substitution of occurrences of $x$ with $v$  is written as {$[v/x]$},
while a sequentially applied list of substitutions is written as {$[\overline v/\overline x]$}.
We assume a substitution operation defined on patterns, terms, contexts, and other substitutions
(written as $u\sigma$, $p\sigma$, $Γ\sigma$ and $\sigma\gamma$).
We will have two shorthands for the $\Pi$-type. We write $A → B$ for $(x:A) → B$ if $x$ is not free in $B$,
and $\Delta → B$ for telescopic $\Pi$-types.
\end{notation}

The system is almost identical to every other dependently-typed programming language,
such as Agda~\cite{AgdaOri}, Idris~\cite{Idris}, or Lean~\cite{Lean} with some subtle differences:
\begin{enumerate}
\item To avoid dealing with universes that are unrelated to conditions,
we consider $\UU$ a well-typed type expression, and it does not have a type.
Alternatively, we can adapt a predicative hierarchy of universes.
We use universe \`a la Russell for syntactical convenience.
\item Case-split expressions only work for variables and unnested patterns,
similar to Mini-TT~\cite[\S 6.2]{MiniTT} and the case trees in Agda~\cite[\S 4, (27)]{DepPM}.
\end{enumerate}

Then, we extend the syntax to define inductive types with conditions and functions in~\cref{fig:sig-decl}.

\begin{figure}[h!]
\begin{align*}
  Γ,\Delta ::= & \quad \overline{x_i:A_i} && \text{context} \\[-0.3em]
  decl ::= & \quad \kwdata~\Dat D~\Delta~\overline{cons} && \text{inductive type} \\[-0.3em]
      \mid & \quad \kw{func}~\func f:A⇒ v && \text{function definition} \\[-0.3em]
  cons ::= & \quad \mid \cons c~\Delta~\kwwith~u && \text{constructor} \\[-0.3em]
  \Sigma ::= & \quad \overline{decl} && \text{signature}
\end{align*}
\caption{Syntax of signature and declarations}
\label{fig:sig-decl}
\end{figure}

We introduce the notion of \textit{signature}, which stores a list of declarations,
and \textit{context}, which is a telescopic list of bindings.
A constructor without conditions is equivalent to a constructor with an empty condition,
so we assume all constructors to have a condition for the convenience when it comes to formalization.
In typing rules, we omit the vertical bars in $cons$ which are just syntactical separators of constructors.

\begin{remark}
\label{rem:other-features}
We formalized a type theory without any notion of ``indexed types''.
A simple encoding of a subset of indexed types is formalized in~\cite{SIT},
which features the independence of unification-based dependent pattern matching,
known to be extremely complicated in both semantical interpretation and practical implementation.
In~\cite[\S 4.1]{SIT}, the authors claimed that if there is a built-in identity type,
all of the traditional ``indexed types'' can be encoded using the proposed syntax.
They mentioned that the syntactical convenience provided by dependent pattern matching can be superseded by
an externally implemented \textsf{rewrite} mechanism, so this encoding is not even syntactically lossy.
Fortunately, $\Path$ is such an identity type as proved in~\cref{cor:path-id}.
Since (co)conditions are independent of unification-based dependent pattern matching,
we can develop an implementation of intensional type theory
entirely without dependent pattern matching and the index-unification algorithm.
\end{remark}

\subsection{Operations on terms} 
\label{sub:ops}
Before getting into the typing rules, we need to introduce some operations on terms.

We define \fbox{$\vars\Delta$} to compute the list of variables in $\Delta$:
\begin{align*}
\vars\emptyset &:= \emptyset \\
\vars{x:A,\Delta} &:= x,\vars\Delta
\end{align*}
This operation is notated as $\hat\Delta$ in~\cite{DepPM}, but we believe that using a
function name instead of a symbol makes it easier to infer the meaning of these operations.

We define \fbox{$\replace\Delta{\overline x}$} for replacing the names in $\Delta$ with $\overline x$,
assuming $\Delta$ and $\overline x$ have the same length.
\begin{align*}
\replace\emptyset\emptyset &:= \emptyset \\
\replace{(y:A,\Delta)}{(x',\overline x)} &:= x':A,\replace{\Delta[x'/y]}{\overline x}
\end{align*}

The reduction rules of terms \fbox{$\Sigma⊢ a \mapsto b$} are defined in~\cref{fig:whnf}.
It is good to have $\eta$-conversion, but it is also unnecessary. Therefore,
we use only untyped reduction which does not use the typing context.

\begin{figure}[h!]
\RaggedRight
Elimination of the $\Pi$-type.
\begin{mathpar}
\inferrule{\Sigma⊢ u \mapsto \lam x a \\\\ \Sigma⊢ a[v/x] \mapsto b}
{\Sigma⊢ u~v \mapsto b} \and
\inferrule{\Sigma⊢ u' \mapsto \kwcase_y~\set{\overline p} \\
\Sigma⊢ v' \mapsto \cons c~\overline v \\\\
 \cons c~\overline x⇒ u \in \overline p \\
 \Sigma⊢ u[\overline v/\overline x][\cons c~\overline v/y] \mapsto a}
{\Sigma⊢ u'~v'\mapsto a} \and
\inferrule{\kw{func}~\func f~:A ⇒ u\in \Sigma}
{\Sigma⊢ \func f \mapsto u}
\end{mathpar} \\
Reduction due to conditions.
\begin{mathpar}
\inferrule{\kwdata~\Dat D~\Delta~\kwwhere~\overline{cons} \in \Sigma \\
\cons c~\Delta_{\cons c}~\kwwith~u\in \overline{cons} \\
\Sigma⊢ \overline v \mapsto \overline b \\
\Sigma⊢ u~\overline b \mapsto a}
{\Sigma⊢ \cons c~\overline v \mapsto a}
\end{mathpar}
\caption{Reduction rules}
\label{fig:whnf}
\end{figure}

We omit the \textsc{Regularity} rule and the \textsc{CoeLeft} rule
as they are already defined in~\cref{ssub:interval}.

We denote the confluence-checking algorithm as
\fbox{$\SGvdash \confluence\Delta u v{\overline p}A$}, where:
\begin{itemize}
\item $u$ is the body of a pattern matching on a constructor $\cons c$ with a condition.
\item $v$ is the condition of $\cons c$.
\item $\Delta$ is a context containing the parameters of $\cons c$.
\item $\overline p$ is the list of pattern matching clauses that $u$ belongs to.
\item $A$ is the result type of the pattern matching clauses in $\overline p$.
\end{itemize}
We postpone its definition to~\cref{sub:confl} considering its complexity and
the high relevance to its usage in the typing rules
(in particular, the \textsc{SplitCt} rule and the \textsc{SplitCond} rule in~\cref{fig:type-expr}).

\subsection{Typing rules} 
\label{sub:typing-terms}
Our type theory has the following typing judgments, all parameterized by
a signature $\Sigma$ and a context $Γ$ which are assumed to be well-formed.
Type $A$ or context $\Delta$ appearing at the right-hand side of the colons in
the judgments are also assumed to be well-formed.
\begin{itemize}
\item \fbox{$\SGvdash \Delta$} $\Delta$ is a well-formed context in $\Sigma;Γ$
\item \fbox{$\SGvdash u:A$} term $u$ has type $A$ in $\Sigma;Γ$
\item \fbox{$\SGvdash_c u:A$} $u$ is a valid condition of type $A$ in $\Sigma;Γ$
\item \fbox{$\SGvdash \overline u:\Delta$} terms $\overline u$ instantiate context $\Delta$ in $\Sigma;Γ$
\item \fbox{$\SGvdash u ≡ v:A$} terms $u$ and $v$ are equal instances of type $A$ in $\Sigma;Γ$
\end{itemize}
Each inference rule entails the well-formedness of the signature and the context.
The expected type or context which appears on the right-hand side of the colon.
We can also define \fbox{$\SGvdash \overline u=\overline v:\Delta$} in a way
similar to a combination of the last two judgments, but it is not used in any of the typing rules.

The typing rules for terms are defined in~\cref{fig:type-expr}, grouped by the relevant formation rule.
Unimportant rules are not labeled.
We omit the rule \textsc{ElimI} which is already defined in~\cref{ssub:interval}.
\begin{figure}[h!]
\RaggedRight
The rules related to the $\II$-type:
\begin{mathpar}
\inferrule{}{\SGvdash \II:\UU} \and \inferrule{}{\lcon:\II} \and
\inferrule{}{\rcon:\II}
\end{mathpar} \\
The rules related to the $\Pi$-type:
\begin{mathpar}
\inferrule{\SGvdash A:\UU \\ \Sigma;Γ,x:A⊢ B:\UU}
{\SGvdash (x:A)→ B:\UU} \and
\inferrule{\kw{func}~\func f:A\in\Sigma}{\SGvdash \func f:A} \and
\inferrule{\Sigma;Γ,x:A⊢ b:B[x/y]}{\SGvdash \lam x b : (y:A)→ B} \and
\inferrule{\SGvdash u : (x:A)→ B \\ \SGvdash v:A}{\SGvdash u~v:B[v/x]}
\end{mathpar} \\
The rules related to the inductive types (with conditions):
\begin{mathpar}
\inferrule{\kwdata~\Dat D~\Delta~\overline{cons}\in\Sigma
 \\ \SGvdash \overline u : \Delta}
 {\SGvdash \Dat D~\overline u:\UU} \and
\inferrule{\kwdata~\Dat D~\Delta~\overline{cons} \in\Sigma \\
\cons c~\Delta_{\cons c}~\kwwith~{v_{\cons c}}\in \overline{cons} \\\\
 \SGvdash \overline u:\Delta \\
 \SGvdash \overline v:\Delta_{\cons c}[\overline u/\vars\Delta]}
 {\SGvdash \cons c~\overline v:\Dat D~\overline u} \and
\inferrule{\kwdata~\Dat D~\Delta~\emptyset \in \Sigma \\
 \SGvdash \overline v:\Delta \\ \Sigma;Γ,x:\Dat D~\overline v ⊢ A:\UU}
 {\SGvdash \kwcase_x~\set{}:\Dat D~\overline v → A}\textsc{SplitEmpty} \and
\inferrule{\kwdata~\Dat D~\Delta~\overline{cons}\in\Sigma \\
 \Sigma;Γ,y:\Dat D~\overline u ⊢ A[y/z] : \UU \\
 \SGvdash \overline u : \Delta \\\\
 \left(
  \inferrule{(\cons {c_k}~\overline x⇒ u_k)\in\overline p \\
    Γ' := Γ,y:\Dat D~\overline u,\replace{\Delta_k}{\overline x} \\\\
    \Sigma;Γ'⊢ u_k:A[y/z][\cons {c_k}~\overline x/y] \\\\
    \Sigma;Γ'⊢ \confluence{\Delta_k}{v_k[\overline u/\vars\Delta]}{u_k}{\overline p}{A[y/z]}}{}
 \right)_{\forall (\cons {c_k}~\Delta_k~\kwwith~{v_k}) \in \overline{cons}}}
 {\SGvdash \kwcase_y~\set{\overline p}:(z:\Dat D~\overline u)→ A}\textsc{SplitCt}
\end{mathpar} \\
The rules for conditions:
\begin{mathpar}
\inferrule{\SGvdash a:A}{\SGvdash_c a:A} \and
\inferrule{\Sigma;Γ,j:\II ⊢ A[j/i] : \UU \\\\
\left(
  \inferrule{\cons {c_k} \in \set{\lcon,\rcon} \\\\
    \Sigma;Γ,j:\II⊢_c u_k:A[j/i][\cons {c_k}/j]}{}
\right)_{\forall (\cons {c_k}⇒ u_k) \in \overline p}}
{\SGvdash_c \kwcase_j~\set{\overline p}:(i:\II)→ A}\textsc{SplitI} \and
\inferrule{\kwdata~\Dat D~\Delta~\overline{cons}\in\Sigma \\
\Sigma;Γ,y:\Dat D~\overline u ⊢ A[y/z] : \UU \\
\SGvdash \overline u : \Delta \\\\
\left(
  \inferrule{(\cons {c_k}~\Delta_k~\kwwith~{v_k}) \in \overline{cons} \\
    Γ' := Γ,y:\Dat D~\overline u,\replace{\Delta_k}{\overline x} \\\\
    \Sigma;Γ'⊢_c u_k:A[y/z][\cons {c_k}~\overline x/y] \\\\
    \Sigma;Γ'⊢ \confluence{\Delta_k}{v_k[\overline u/\vars\Delta]}{u_k}{\overline p}{A[y/z]}}{}
\right)_{\forall (\cons {c_k}~\overline x⇒ u_k) \in \overline p}}
{\SGvdash_c \kwcase_{y}~\set{\overline p}:(z:\Dat D~\overline u)→ A}\textsc{SplitCond}
\end{mathpar} \\
The rule for convertible types.
\begin{mathpar}
\inferrule{\SGvdash a:A \\ \SGvdash A ≡ B : \UU}{\SGvdash a:B}
\end{mathpar}
\caption{Typing rules for types, terms, and conditions}
\label{fig:type-expr}
\end{figure}

The most significant (and complex) rules are the \textsc{SplitCt} and the \textsc{SplitCond} 
which takes care of case-split terms.
Both of them work similarly in the following way:
for a constructor \fbox{$\cons {c_k}~\Delta_k~\kwwith~{v_k}$}
and a pattern matching clause \fbox{$\cons {c_k}~\overline x⇒ u_k$},
we first add the binding introduced by the case-split \fbox{$y:\Dat D~\overline u$}
and the bindings introduced by the pattern \fbox{$\replace{\Delta_k}{\overline x}$} into the context,
and check the clause body $u_k$ against the result type $A[\cons {c_k}~\overline x/y]$.
After that, we perform a confluence check, which will be further discussed in~\cref{sub:confl}.
\textsc{SplitI} also deals with case-splits, but it is specific to splitting the $\II$ type,
whose constructors have neither parameters nor conditions, leading to a much simpler rule.

\begin{remark}
The rules do not eliminate redundant or duplicate patterns, as such elimination shall be done in
practical implementations of type theories with conditions.
\end{remark}

\begin{remark}
Interestingly, the only difference between \textsc{SplitCond} and \textsc{SplitCt}
is in the exchange of the quantification and a premise inside the quantification --
\fbox{$(\cons {c_k}~\overline x⇒ u_k) \in \overline p$}
and \fbox{$(\cons {c_k}~\Delta_k~\kwwith~{v_k}) \in \overline{cons}$}.

This is because we do not require coverage when the case expressions are in conditions,
so we traverse the \textit{defined} clauses and check them in \textsc{SplitCond}.
In the case of normal case-splits where coverage is required, we traverse the \textit{required}
clauses and check them in \textsc{SplitCt}.
\end{remark}

To form a valid signature $\Sigma$, we check a list of declarations sequentially and
store them into the signature, allowing us to check latter declarations against prior ones.
The notation for signature formation is \fbox{$\SGvdash$}, defined in~\cref{fig:sign-well}.

\begin{figure}
\begin{mathpar}
\inferrule{
\left.\inferrule{\SGvdash \\\\ \SGvdash \Delta}{}\right. \\
\left(
  \inferrule{
    \Delta_c := \overline{\cons{c_j} : \Delta_j → \Dat D~\vars\Delta}_{\forall j\in[1, k)}
    \\\\
  \Sigma;Γ,\Delta,\Delta_c ⊢ \Delta_k \\\\
  \Sigma;Γ,\Delta,\Delta_c ⊢_c u_k : \Delta_k → \Dat D~\vars\Delta}{}
\right)_{\forall (\cons {c_k}~\Delta_k~\kwwith~u_k) \in \overline{cons}}}
{\Sigma,\kwdata~\Dat D~\Delta~\overline{cons}; Γ⊢}\textsc{DataDecl}
\and
\inferrule{\SGvdash \\ \Sigma;Γ⊢ u : A}
{\Sigma,\kw{func}~\func f:A ⇒ u; Γ⊢}\textsc{FunDecl}
\end{mathpar}
\caption{Well-formedness of signature $\Sigma$}
\label{fig:sign-well}
\end{figure}

In \textsc{DataDecl}, the constructors are assumed to be checked sequentially,
and the type checking of the $k$-th constructor uses the type checking result
of the $j$-th constructors where $j\in[1,k)$.

\begin{lem}[Ended]
\label{lem:terminate}
Reduction based on conditions in~\cref{fig:whnf} will always terminate.
\end{lem}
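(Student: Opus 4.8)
The plan is to equip each constructor carrying a non-trivial condition with a well-founded \emph{rank} and to show that every firing of the ``reduction due to conditions'' rule in~\cref{fig:whnf} is ultimately controlled by strictly smaller ranks, so that the computation of weak head normal forms cannot loop on account of conditions. Given a signature $\Sigma$, I would enumerate the constructors that carry a condition in the order their declarations appear in $\Sigma$ --- and, within a single $\kwdata$, in the order of the clauses --- and let $\mathrm{rk}(\cons c)$ be the position of $\cons c$ in this enumeration. Since $\Sigma$ is a finite list of declarations and each $\kwdata$ has finitely many constructors, $\mathrm{rk}$ takes values in a well-founded order.

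The decisive input is a scoping fact read off the \textsc{DataDecl} rule: when the condition $u_k$ of the $k$-th constructor $\cons{c_k}$ of a data type $\Dat D$ is checked, the context is extended only by the \emph{preceding} constructors, $\Delta_c := \overline{\cons{c_j}:\Delta_j \to \Dat D~\vars\Delta}_{j\in[1,k)}$, and everything else $u_k$ mentions comes from declarations of $\Sigma$ strictly before $\Dat D$. Hence every constructor occurring \emph{literally} in $u_k$ has rank strictly below $\mathrm{rk}(\cons{c_k})$, and --- unfolding this hereditarily through the (also earlier-declared) functions that $u_k$ may invoke --- so does every constructor those functions can expose. The plan is then to argue that reducing $\cons{c_k}~\overline v$ through its condition sets off further firings of the condition rule only at strictly smaller ranks: the arguments $\overline v$ have already been driven to weak head normal form $\overline b$ by the first premise of the rule, so material merely copied from $\overline b$ is already stuck and cannot re-fire the rule at that occurrence. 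One then concludes by well-founded induction on $\mathrm{rk}$ of the head constructor (with a bottom rank for terms whose head is not a constructor with a condition), nested inside an induction on the argument terms that discharges the reductions $\overline v \mapsto \overline b$, which act on proper subterms.

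The main obstacle is that this last claim is not literally true. The reduction of $u_k$ applied to $\overline b$ interleaves $\beta$-reduction, case-splitting, and function unfolding with further firings of the condition rule, and pattern-matching against a weak-head-normal argument can bind a \emph{higher-order} component of that argument to a variable of $u_k$; a subsequent $\beta$-step on an application of such a component then relocates a subterm of the original argument --- possibly headed by a constructor of arbitrarily large rank --- into redex position. So the slogan ``ranks strictly descend along the chain of firings'' fails, and a rigorous proof must combine $\mathrm{rk}$ with a second, size- or multiset-theoretic component tracking the material contributed by the arguments, into a single well-founded order; it must also lean on the assumption, left implicit here since the metatheory is not developed in this paper, that the fragment of~\cref{fig:whnf} not arising from conditions is already terminating, so that adjoining the condition rule keeps the whole reduction terminating.
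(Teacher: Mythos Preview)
Your rank-based argument is precisely the idea behind the paper's proof, which consists of two observations only: conditions cannot refer to themselves directly or indirectly (this is what the \textsc{DataDecl} rule enforces, and it is exactly your ``rank is well-defined and strictly decreases along condition-to-condition references''), and each data type has finitely many constructors. The paper states this in a single sentence and stops.

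Where you diverge is in ambition. You attempt a rigorous termination argument for the full head-reduction relation in which condition firings are interleaved with $\beta$, case-splitting, and function unfolding; the paper treats the lemma as an informal sanity check and explicitly disclaims developing the metatheory (see the abstract and \cref{sec:metatheory}). Your identification of the higher-order obstacle --- that material from the already-normalized arguments $\overline b$ can be relocated into head position by later $\beta$- and case-steps, so the head rank need not decrease along a chain of firings --- is a genuine subtlety that the paper does not engage with at all. Your proposed remedy (a lexicographic or multiset refinement of the rank, plus an assumed termination result for the condition-free fragment) is the standard shape such arguments take and is plausible, though you rightly flag it as not worked out. In short: same core idea as the paper, but you are targeting a stronger and more formal statement than the paper intends, and you are correct that doing so honestly requires the extra machinery you sketch.
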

\begin{proof}
Neither direct nor indirect recursion are allowed in conditions,
and the number of constructors of an inductive type is finite.
\end{proof}

\begin{remark}
\label{rem:bindings}
According to \textsc{DataDecl}, the condition $u_k$ in the $k$-th constructor of an
inductive type $\Dat D~\Delta$ has access to the following variables:
the type parameters $\Delta$, the previously defined constructors $\Delta_c$,
and the provided local bindings $Γ$.
\end{remark}

\subsection{Checking confluence} 
\label{sub:confl}
The confluence checking algorithm \fbox{$\SGvdash \confluence\Delta u v{\overline p}A$}
mentioned previously at the end of~\cref{sub:ops} is defined in~\cref{fig:confluence}.
\begin{figure}[h!]
\begin{mathpar}
\inferrule{(\cons c~\overline x ⇒ u_{\cons c}) \in \overline p \\
\Sigma;Γ⊢ v ≡ u_{\cons c}[\overline u / \overline x] : A}
{\SGvdash \confluence\emptyset{\cons c~\overline u}{v}{\overline p}A}
\textsc{ConfCon}\and
\inferrule{\Sigma;Γ,x:B ⊢ \confluence\Delta{u[x/y]}v{\overline p}A}
{\SGvdash \confluence{(x:B,\Delta)}{\lam y u}v{\overline p}A}
\textsc{ConfAbs}\and
\inferrule{\kwdata~\Dat D~\Delta_D~\overline{cons}\in\Sigma \\\\ \left(
\inferrule{}{\cons{c_i}~\Delta_i \in \overline{cons} \\ Γ' := Γ,\replace{\Delta_i}{\overline{x_i}} \\\\
\Sigma⊢ u_i[\cons{c_i}~\overline{x_i}/y] \mapsto u \\\\
\Sigma;Γ',x:B ⊢ \confluence\Delta u{v[\cons{c_i}~\overline{x_i}/x]}{\overline p}A}
\right)_{\forall (\cons{c_i}~\overline{x_i}⇒ u_i) \in \overline q}}
{\SGvdash \confluence{(x:B,\Delta)}{\kwcase_y~\set{\overline q}}v{\overline p}A}
\textsc{ConfSplit}
\end{mathpar}
\caption{Confluence checking}
\label{fig:confluence}
\end{figure}

\fbox{$\SGvdash \confluence\Delta u v{\overline p}A$} is defined
by performing induction simultaneously on $\Delta$ and the condition $u$.
This assumes that $u$ must be in constructor form when $\Delta$ is empty (\textsc{ConfCon}),
and is otherwise either in abstraction or case-split form (\textsc{ConfAbs} and \textsc{ConfSplit}).

Here is a textual explanation for the rules in~\cref{fig:confluence}:
\begin{description}
\item[\textsc{ConfCon}]
  This rule applies when the condition has run out of parameters to pattern-match against.
  Recall that $\cons c~\overline u$ is the body of the clause for $\cons{c_k}$ in $\overline p$,
  and $\cons c$ also has its own corresponding clause whose body is $v$.

  Let us remind ourselves the goal of the confluence check:
  given an expression $\cons{c_k}~\overline u:\Dat D~\Delta_D$, we can reduce it in two ways:
  (1) reduce by applying the condition of $\cons{c_k}$ and then applying a suitable clause in $\overline p$ or
  (2) just reduce by applying the clause for $\cons{c_k}$ in $\overline p$.
  In other words, assuming a function $f=\kwcase_x~\set{\overline p}$ and the reduction
  $\cons{c_k}~\overline u \mapsto \cons c~\overline v$ due to the condition of $\cons{c_k}$,
  there must be $f~(\cons c~\overline v) ≡ f~(\cons{c_k}~\overline u)$ a definitional equality.

  The premise \fbox{$\Sigma;Γ⊢ v ≡ u_{\cons c}[\overline u / \overline x] : A$}
  checks whether this definitional equality holds.
\item[\textsc{ConfAbs}]
  This rule applies when the condition does not split on a parameter.
  In this case, we substitute $w_k$ and continue the confluence check with the rest
  of the conditions.
\item[\textsc{ConfSplit}]
  This rule applies when the condition splits on a parameter.
  In this case, we just check whether the rest of the conditions are satisfied for all the splits.
\end{description}

\begin{remark}
In \textsc{SplitCond} and \textsc{SplitCt} we can invoke confluence checks before all
clauses are type-checked, while in \textsc{ConfCon} we may use any clause in $\overline p$,
including the ones that are not yet type-checked.
To address the ordering issue, we type check all clauses and then
perform their confluence checks afterward in the actual implementations.
\end{remark}

\begin{lem}
\label{lem:introduced}
A well-typed condition is generated from lambda abstractions, partial case-splits,
and applications to other constructors only.
\end{lem}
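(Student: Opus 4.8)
The plan is to prove \cref{lem:introduced} by induction on the condition $u$: in each step I first use inversion on its typing to show that $u$ already has one of the three claimed top-level shapes --- a lambda abstraction, a (partial) case-split, or an application of another constructor --- and then, in the first two cases, observe that the immediate sub-conditions it exposes are themselves well-typed conditions, so that the induction hypothesis applies. There are only three rules that can conclude $\SGvdash_c u : A$: the rule promoting a well-typed term to a condition, \textsc{SplitI}, and \textsc{SplitCond}. The two splitting rules give the desired shape outright: \textsc{SplitI} forces $u = \kwcase_j~\set{\overline p}$ with clauses over $\lcon$ and $\rcon$ only, \textsc{SplitCond} forces $u = \kwcase_y~\set{\overline p}$, and in both the clause bodies are again typed by $\SGvdash_c$, so the induction hypothesis applies directly to them.

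All the work is in the promotion rule, where $u$ can in principle be any well-typed term $a$ with $\SGvdash a : A$. Here I would exploit the structural fact coming from \textsc{DataDecl} and \cref{rem:bindings}: a condition is always checked against a type of the form $\Delta' \to \Dat D~\vars\Delta$, where $\Dat D$ is the inductive type currently being declared. Since declarations are processed sequentially (\cref{fig:sign-well}), the symbol $\Dat D$ occurs in neither $\Sigma$, nor the ambient context $\Gamma$, nor the parameter telescope $\Delta$; the only bindings whose type mentions $\Dat D$ are the previously-checked constructors collected in $\Delta_c$. I would then invert the derivation of $\SGvdash a : A$ after passing to a weak-head normal form (so as to absorb the conversion rule, the unfolding of defined functions, and the \textsc{Regularity}/\textsc{CoeLeft} reductions of $\coe$). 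When $A$ is a $\Pi$-type, the only available introduction form is $a = \lam x b$, and $b$ is again well-typed at a type of the same shape with a shorter telescope, hence a well-typed condition by the promotion rule, so the induction hypothesis applies. When $\Delta' = \emptyset$, so that $A = \Dat D~\vars\Delta$, either $a$ is a constructor application $\cons c~\overline v$ --- forced to be a constructor of $\Dat D$, hence one of those in $\Delta_c$ --- or $a$ is neutral and therefore headed by a $\Delta_c$-binding, i.e.\ again an application of a previously-declared constructor. A term-level case-split reaching this point through \textsc{SplitCt} or \textsc{SplitEmpty} is treated as the degenerate partial case-split it syntactically is, with the induction hypothesis applied to its branches. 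Every remaining term former of \cref{fig:syntax} --- $\func f$, $\coe$, an application headed by $\func f$ or by $\coe$, $\II$, $\lcon$, $\rcon$, a $\Pi$-type, $\UU$ --- is excluded because its type cannot be of the shape $\Delta' \to \Dat D~\vars\Delta$: reduction never introduces the fresh symbol $\Dat D$ into a type that does not already contain it. Finally, the constructor occurring at a leaf is necessarily \emph{other} than the one whose condition is under scrutiny, since that constructor is not yet in scope --- this is exactly what \cref{lem:terminate} tacitly uses.

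The hard part is the inversion step inside the promotion case. It is essentially a canonical-forms statement for $\Pi$-types and for inductive-type applications, plus the need to argue ``up to weak-head reduction'' to dispose of the conversion rule and the non-canonical head forms, and the paper deliberately does not build the metatheory (there is no normalization or canonicity theorem to invoke). I would therefore run this step semi-syntactically, leaning on the purely syntactic and robust observation that $\Dat D$ is out of scope while its constructors' conditions are checked, which is already enough to enumerate the admissible head forms. The payoff --- a well-typed condition is a lambda abstraction of a condition, a partial case-split whose branches are conditions, or an application of another constructor --- is precisely the trichotomy on which the confluence-checking algorithm of \cref{fig:confluence} pattern-matches (the rules \textsc{ConfAbs}, \textsc{ConfSplit}, and \textsc{ConfCon} respectively), which is the use to which the lemma is put.
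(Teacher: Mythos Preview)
Your argument and the paper's take genuinely different routes. The paper does \emph{not} do inversion on the $\vdash_c$ judgment at all: its one-line proof is ``by induction on the rules in \cref{fig:confluence}'', i.e.\ it reads the trichotomy directly off the confluence-checking rules \textsc{ConfAbs}, \textsc{ConfSplit}, \textsc{ConfCon}, which are syntax-directed in their second argument and already presuppose that a condition is a lambda, a case-split, or a constructor application once $\Delta$ is exhausted. In effect the paper treats the confluence algorithm as the de facto specification of the admissible syntactic shapes, and the appeal to \cref{lem:terminate} supplies the ``other constructor'' part. This is fast, but it is close to circular relative to the $\vdash_c$ judgment: nothing in \cref{fig:confluence} is invoked when a condition is \emph{declared} in \textsc{DataDecl}, so the paper's proof really establishes ``any condition on which the confluence check can be run has this shape'', leaving the link back to $\vdash_c$ implicit.

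Your approach is the principled one for the lemma as stated: you invert the three $\vdash_c$ rules and isolate the promotion case as the only nontrivial one. The cost is exactly what you flag --- to handle an arbitrary well-typed term promoted to a condition you need a canonical-forms / weak-head inversion argument at type $\Delta' \to \Dat D~\vars\Delta$, and the paper explicitly declines to build that metatheory. Your ``$\Dat D$ is fresh in $\Sigma$'' observation is the right lever (it is morally what forces the head to come from $\Delta_c$), but turning it into a proof still needs subject reduction and a normal-form classification that are not available here. So: your decomposition is sounder and explains \emph{why} the confluence rules of \cref{fig:confluence} are exhaustive, whereas the paper simply asserts exhaustiveness by pointing at those rules; your version would be the right one in a paper that actually develops the metatheory.
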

\begin{proof}
By induction on the rules in~\cref{fig:confluence}, which traverses $\Delta$
and allows only constructor application when $\Delta$ is empty.
By~\cref{lem:terminate}, the constructor applied cannot be the current constructor.
\end{proof}

The rules in~\cref{fig:confluence} use several terms in different contexts.
Now, we show that the terms are always used in proper contexts.
\begin{lem}
\label{lem:bounded}
The premise \fbox{$\Sigma;Γ⊢ v ≡ u_{\cons c}[\overline u / \overline x] : A$}
in the \textsc{ConfCon} rule is well-scoped.
\end{lem}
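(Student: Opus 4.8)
The claim unpacks to the following: at every application of \textsc{ConfCon} inside a well-formed derivation of a confluence judgment, each of $v$, $u_{\cons c}[\overline u/\overline x]$, and $A$ has all of its free variables among those declared in $Γ$. The plan is to derive this from a scoping invariant on confluence derivations that is set up the moment the check is launched --- in a premise of \textsc{SplitCt} or \textsc{SplitCond} --- and is then visibly preserved by \textsc{ConfAbs} and \textsc{ConfSplit}, so that it holds in particular at every \textsc{ConfCon} leaf.

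Fix the enclosing application of \textsc{SplitCt}/\textsc{SplitCond}; call its context $Γ_0$ and its scrutinee type $\Dat D~\overline{u_0}$ (renaming \textsc{SplitCt}'s $\overline u$ so as not to clash with the constructor arguments $\overline u$ occurring in \textsc{ConfCon}). Its premises already tell us, at the start of the confluence check: the body $v$ is a clause body $u_k$ typed by $\Sigma;Γ_0,y:\Dat D~\overline{u_0},\replace{\Delta_k}{\overline x}⊢ u_k:A[y/z][\cons{c_k}~\overline x/y]$; the fifth argument is $A[y/z]$ with $\Sigma;Γ_0,y:\Dat D~\overline{u_0}⊢A[y/z]:\UU$; and the condition $u$ is $v_k[\overline{u_0}/\vars\Delta]$, a well-typed condition, hence by \cref{lem:introduced} a nest of $\lambda$-abstractions and partial case-splits whose innermost body is an application to another constructor. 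By \textsc{DataDecl}, the telescopes $\Delta_k$ and $\Delta_{\cons c}$, and each constructor's condition, are well-formed in the appropriate contexts. The invariant I would carry at every node $\Sigma;Γ⊢\confluence\Delta u v{\overline p}A$ of the confluence derivation reads: (i) $Γ$ is well-formed and contains $Γ_0,y:\Dat D~\overline{u_0}$; (ii) $u$ is well-scoped in $Γ$; (iii) $v$ is well-scoped in $Γ$ extended by those pattern variables of $\cons{c_k}$'s clause not yet peeled from $\Delta$; (iv) $A$ is well-scoped in $Γ$; and (v) every clause $(\cons c~\overline x ⇒ u_{\cons c}) \in \overline p$ has $u_{\cons c}$ well-scoped in $Γ_0,y:\Dat D~\overline{u_0},\replace{\Delta_{\cons c}}{\overline x}$ --- the last being a purely syntactic fact recording where the clause sits inside the case-split being elaborated, so that it holds even for clauses not yet type-checked (cf.\ the ordering remark). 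The base case of the invariant is precisely the collection of \textsc{SplitCt}/\textsc{SplitCond} premises just recalled.

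For the inductive step: \textsc{ConfAbs} extends the context by the head binding of $\Delta$ --- i.e.\ the next pattern variable --- $\alpha$-renames the leading $\lambda$ of $u$ to match it, and leaves $v$, $A$, and $\overline p$ alone; the invariant survives because the variable introduced simply migrates from the ``telescope'' side of clause (iii) onto the ``context'' side, and because a $\lambda$-abstraction is well-scoped in $Γ$ exactly when its body is well-scoped in $Γ$ extended by the bound variable. \textsc{ConfSplit} likewise only extends the context (by the inner pattern variables and the peeled head of $\Delta$); $A$ and $\overline p$ are untouched; $v$ is replaced by $v[\cons{c_i}~\overline{x_i}/x]$, a capture-avoiding substitution of an in-scope variable by a term whose variables are in scope in the premise, so (iii) survives; and the new condition is the reduct of $u_i[\cons{c_i}~\overline{x_i}/y]$ for a clause body $u_i$ of the condition's inner split, which is well-scoped because $u_i$ is syntactically scoped by its pattern together with the ambient context and because, by inspection of \cref{fig:whnf}, reduction never enlarges the free-variable set. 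Finally, at a \textsc{ConfCon} node $\Delta$ is empty, so there are no residual pattern variables: (iii) and (iv) give $v$ and $A$ well-scoped in $Γ$, and (ii) gives $u=\cons c~\overline u$ well-scoped in $Γ$, hence the argument list $\overline u$ is scoped in $Γ$; and (v), together with the fact that its prefix $Γ_0,y:\Dat D~\overline{u_0}$ is contained in $Γ$ by (i) and that $\overline x$ is precisely the equally-long list eliminated by $[\overline u/\overline x]$, gives that $u_{\cons c}[\overline u/\overline x]$ is well-scoped in $Γ$. That is the assertion.

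The main obstacle I anticipate is the bookkeeping at the point where the confluence check is launched: one must be precise about which context each of $v$, $A$, the condition, and each clause of $\overline p$ genuinely inhabits, and in particular reconcile the telescope argument $\Delta_k$ (carried so that \textsc{ConfAbs}/\textsc{ConfSplit} can walk the condition's binders) with the bindings $\replace{\Delta_k}{\overline x}$ that \textsc{SplitCt}/\textsc{SplitCond} has already placed in the context, verifying that the telescope is consumed in lockstep with the condition's $\lambda$'s and case-splits so that no free variable of $v$ or of the partially applied condition is ever orphaned. A secondary point worth recording is that this argument deliberately avoids any appeal to subject reduction --- whose metatheory this paper does not develop --- and relies only on the elementary observation that the reductions of \cref{fig:whnf} introduce no new free variables, which is precisely enough because the lemma concerns scoping rather than typing.
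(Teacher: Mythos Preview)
Your proposal is correct and follows the same idea as the paper's proof: trace, for each ingredient of the \textsc{ConfCon} premise, where its free variables come from and check that the relevant context has grown to include them by the time the rule fires. The paper does this as a terse item-by-item audit; you recast it as an explicit invariant maintained by induction on the confluence derivation, which is a perfectly sound and arguably cleaner presentation.

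One small imprecision is worth flagging, precisely at the point you yourself identified as the main obstacle. Your clause (iii) says $v$ is well-scoped in $Γ$ \emph{extended by the pattern variables not yet peeled from $\Delta$}, and your \textsc{ConfAbs} step speaks of a variable ``migrating'' from the telescope side to the context side. But in the launching rules \textsc{SplitCt}/\textsc{SplitCond} the context $Γ'$ already contains \emph{all} of $\replace{\Delta_k}{\overline x}$ before the confluence check begins, so $v=u_k$ is well-scoped in $Γ$ outright from the start, and stays so because $Γ$ only grows. The variables that \textsc{ConfAbs}/\textsc{ConfSplit} add are the \emph{telescope} names (used to walk the condition $u$), not the \emph{pattern} names $\overline x$ (used by $v$); there is no lockstep consumption to verify for $v$. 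This is exactly the paper's point (4). Once you observe this, your anticipated bookkeeping obstacle dissolves, and the rest of your argument goes through unchanged.
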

\begin{proof}
We only discuss the terms that moved out of the contexts where they were originally typed.
\begin{enumerate}
\item The condition is well-scoped. By~\cref{rem:bindings}, it has access to
the parameters of the inductive $\Dat D$, previous constructors, and provided local variables.
The latter two are in $Γ$ (from \textsc{DataDecl}, through \textsc{SplitCt} and \textsc{SplitCond}),
while the parameters are substituted out in \textsc{SplitCt} and \textsc{SplitCond}.
\item $\cons c~\overline u$ in \textsc{ConfCon} is the innermost term in the condition,
so it may contain bindings from abstractions, case-splits, and patterns.
The variable bound by abstractions and case-splits ($y$ in both rules) are substituted out
with the variable $x$ in \textsc{ConfAbs} (added to $Γ$) and term
$\cons{c_i}~\overline{x_i}$ (added to $Γ'$ by a \textsf{replace} operation).
\item $u_{\cons c}$ in \textsc{ConfCon} have access to $\overline x$,
which is substituted out before the conversion check.
\item $v$ comes from the term $u_k$ in \textsc{SplitCond} and \textsc{SplitCt}.
It can access $\overline x$ in these two rules,
which are added to $Γ'$ by a \textsf{replace} operation.
\end{enumerate}
\end{proof}

\begin{thm}
\label{thm:soundness}
If $\SGvdash u:\Dat D→ A, \SGvdash \cons c:\Delta→\Dat D, \SGvdash \overline v:\Delta$,
suppose $\Sigma⊢ u~(\cons c~\vars\Delta)\mapsto a',
\Sigma⊢ a'[\overline v/\vars\Delta]\mapsto a$ and
$\Sigma⊢ \overline v\mapsto \overline{b'},
\Sigma⊢ u~(\cons~\overline{b'})\mapsto b$,
then $\SGvdash a≡ b$.
\end{thm}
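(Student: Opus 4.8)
The plan is to put $u$ into weak head normal form, case-analyze its shape, and then compare the two reduction sequences step by step, so that the only place where they genuinely diverge gets pinpointed and discharged by the confluence check that \textsc{SplitCt} attaches to the clause of $\cons c$. If the whnf of $u$ does not case-split on its $\Dat D$-argument (a $\lambda$ that ignores it, or a neutral term), or if the condition of $\cons c$ is stuck both on the symbolic parameters $\vars\Delta$ and on the evaluated arguments $\overline{b'}$, or if that condition reduces in the same way on $\vars\Delta$ as on $\overline{b'}$, then both $a$ and $b$ are reducts of one and the same term modulo replacing the convertible lists $\overline v$ and $\overline{b'}$, and $a\equiv b$ follows from the fact that the reduction of \cref{fig:whnf} is contained in $\equiv$, together with the congruence of $\equiv$ under substitution. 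So the work concentrates in the remaining case: the whnf of $u$ is a case-split $\kwcase_y~\set{\overline p}$, and reducing $\cons c~\overline{b'}$ fires the condition and rewrites it to a constructor form headed by some $\cons{c'}\neq\cons c$, whereas $\cons c~\vars\Delta$ stays canonical (by \cref{lem:introduced} the condition is a nest of partial case-splits on $\vars\Delta$ with constructor leaves, and a case-split on a variable is stuck).

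In that case, write the clause of $\overline p$ for $\cons c$ as $\cons c~\overline x⇒ u_{\cons c}$ and the one for $\cons{c'}$ as $\cons{c'}~\overline{x'}⇒ u_{\cons{c'}}$, and recall the confluence check $\SGvdash\confluence{\Delta}{v_{\cons c}\sigma_D}{u_{\cons c}}{\overline p}{A'}$ that \textsc{SplitCt} discharges for this case-split, where $\Delta$ is the parameter telescope of $\cons c$, $v_{\cons c}$ its condition, $\sigma_D$ the substitution instantiating the type-parameters of $\Dat D$ as in the type of $u$, and $A'$ the result type with the scrutinee renamed. I would first unfold the left path: since $\cons c~\vars\Delta$ is canonical, $u$ commits to the $\cons c$-clause, and chasing the rules of \cref{fig:whnf} through the second reduction shows $a$ to be a reduct of $u_{\cons c}[\overline v/\overline x][\cons c~\overline v/y]$. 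Then I would unfold the right path: by \cref{lem:introduced,lem:terminate}, $\cons c~\overline{b'}$ reduces to $\cons{c'}~\overline s$ for some concrete $\overline s$, $u$ commits to the $\cons{c'}$-clause, and $b$ is a reduct of $u_{\cons{c'}}[\overline s/\overline{x'}][\cons{c'}~\overline s/y]$.

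The heart of the argument is an induction on the derivation of the confluence check that descends the unique branch picked out by $\overline{b'}$: a \textsc{ConfAbs} node accounts for passing along the next, unmatched component of $\overline{b'}$; a \textsc{ConfSplit} node mirrors one step of the condition's reduction on $\overline{b'}$ — its premise $\Sigma\vdash u_i[\cons{c_i}~\overline{x_i}/y]\mapsto t$ is precisely the reduct taken — and we recurse into the sub-derivation for the constructor $\cons{c_i}$ at the head of that component; and the \textsc{ConfCon} node reached at the leaf $\cons{c'}~\overline s$ supplies the conversion $u_{\cons c}\gamma\equiv u_{\cons{c'}}[\overline s/\overline{x'}]$ at type $A'\gamma$ for the substitution $\gamma$ accumulated along the branch, which by \cref{lem:bounded} is well-scoped and which replays exactly the constructor structure present in $\overline{b'}$, whence $u_{\cons c}\gamma\equiv u_{\cons c}[\overline{b'}/\overline x]$. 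Stitching the pieces together, and using $\overline v\equiv\overline{b'}$ and $\cons c~\overline{b'}\equiv\cons{c'}~\overline s$ (the latter to rewrite the substitution for the scrutinee $y$ on the two sides), one gets $a\equiv u_{\cons c}[\overline v/\overline x][\cons c~\overline v/y]\equiv u_{\cons c}[\overline{b'}/\overline x][\cons c~\overline{b'}/y]\equiv u_{\cons{c'}}[\overline s/\overline{x'}][\cons{c'}~\overline s/y]\equiv b$.

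I expect the real obstacle to be this last bookkeeping rather than the overall structure: verifying that the substitution $\gamma$ threaded through the \textsc{ConfAbs}/\textsc{ConfSplit} nodes coincides, up to conversion, with the substitution that weak head reduction produces when $u$ commits to its $\cons{c'}$-clause; that the contexts line up (here \cref{lem:bounded} does the work); and that the scrutinee variable $y$, which the two paths replace by the distinct but convertible terms $\cons c~\overline{b'}$ and $\cons{c'}~\overline s$, is absorbed by one extra congruence step. A lesser nuisance is pinning down exactly when the condition-reduction rule of \cref{fig:whnf} fires — it should fire only once the partial condition has reached a constructor leaf, i.e.\ ``canonical if stuck'' — which I would settle by appeal to \cref{lem:introduced} and the reading of \cref{ssub:interval,ssub:cits}.
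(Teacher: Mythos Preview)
Your proposal is correct and follows essentially the same approach as the paper: case-analysis on the shape of $u$, with the lambda case handled by associativity/congruence of substitution and the case-split case reduced to the conversion premise of \textsc{ConfCon}. The paper's own proof is terser by an order of magnitude—it simply asserts that when the condition fires the equality ``reduces to the equality in \textsc{ConfCon}''—whereas you spell out the induction on the confluence-check derivation (through \textsc{ConfAbs}/\textsc{ConfSplit} nodes) that justifies this reduction, which is a genuine elaboration rather than a different strategy.
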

\begin{proof}
By case-analysis on the derivation of $u$.
We only consider the cases where a reduction rule (\cref{fig:whnf}) applies.
\begin{itemize}
\item Lambda abstractions do not distinguish terms with different heads,
so $a$ and $b$ are equal by the associativity of substitution.
\item Case-splits work in a similar way to lambda abstractions when the last
rule in~\cref{fig:whnf} is unused in the reduction from $\overline v$ to $\overline{b'}$.
Otherwise the equality reduces to the equality in~\cref{fig:confluence}, rule \textsc{ConfCon}.
\end{itemize}
\end{proof}

\section{Metatheoretical discussions} 
\label{sec:metatheory}
We believe that adding (co)conditions to any type theory will not break their
logical consistency and normalization, and the proofs are straightforward
applications of standard argument (like~\cite{CDR}).
A recent development on semantic approach towards normalization~\cite{CNorm}
may also be applicable, but we are not there yet.

By~\cref{thm:soundness}, we also believe that the reduction rules
in~\cref{fig:whnf} satisfy the global confluence property. This allows us
to choose whatever reduction strategy in practical implementations.

Our type theory uses some constructions in the HoTT-I model as described in~\cref{sub:cubical}.
Some of these constructions, including the $\coe$ eliminator of the interval type,
are defined in a way that breaks canonicity.
For example, the closed term $\coe^\rcon_{\lam i A}$ is a normal form if $i$ is free in $A$,
but it is certainly not a canonical form.

There are two ways to get canonicity back:
\begin{itemize}
\item Remove the interval type to get rid of $\coe$.
This is like trading the canonicity property with the ability to define CITs and $\Path$.
\item Instead of adding cubical features to the type theory with (co)conditions,
replace some constructions in an existing cubical type theory (such as~\cite{HIT-CCTT})
with their (co)condition-based definition and see if we can derive a canonicity theorem from that.
\end{itemize}

The latter is an ideal solution, but we did not use it for the following reasons:
\begin{itemize}
\item Cubical type theories have large rules, therefore proving its metatheoretical properties can be a pain.
\item There is not yet a cubical type theory with regularity, which is necessary for~\cref{lem:regularity}.
This is a difficult problem as described in~\cite[\S 3.4]{CSCCTT}.
\item Cubical type theories define the $\Path$ type as a primitive, while our type theory does not
require it to be a primitive. So, the ideal solution is not integrating a cubical type theory directly,
but a modified version of cubical type theory with unneeded primitives requirement removed.
We consider this adaption as future work.
\item The Kan rules (i.e.~reduction of $\coe$) on the $\Path$ type from cubical type theories
need to be generalized to allow arbitrary records with coconditions.
\end{itemize}

\section{Conclusions} %
\label{sec:conclusion}

We have discussed (co)conditions, a generalization of CITs and the cubical path type.
Although the design of (co)conditions is initially centered around interpreting constructions in homotopy type theory,
there are some other types unrelated to homotopy type theory
(like $\Int$ in~\cref{sub:intro-int} and $\NatInf$ in~\cref{sub:examples}) that can be defined with conditions.
We have formalized a dependent type theory with ``inductive types with conditions'' and function definitions.

\subsection{Implementations} 
\label{sub:impl}
The discussed features are implemented in the following two programming languages,
and we briefly discuss their implementations.
\begin{itemize}
\item The Arend~\cite{Arend} proof assistant.
The authors consider records with coconditions unnecessary as the only useful record with coconditions is the
$\Path$ type, so they removed the full-blown records with coconditions and built-in the $\Path$ type instead.
The termination checker \texttt{foetus}~\cite{foetus} is implemented in Arend
to ensure syntactical consistency and the decidability of type checking.
\item Old versions of the Aya~\cite{Aya} proof assistant.
Unlike Arend, it has both inductive types with conditions \textit{and} records with coconditions.
In the recent versions, it has been replaced with XTT~\cite{XTT,XTT2}, a set-level cubical type theory.
\end{itemize}

\subsection{Related work} 
\label{sub:rel-work}
Type theories based on the idea of Kan cubical sets with support for CITs have been
discussed or formalized~\cite{CHM,ICTT,CCHM,CCTT,UCTT,CHTT1,CHTT3,HIT-CCTT,CSCCTT}
and implemented~\cite{CubicalAgda,RedPRL} in many different ways by many people.
These type theories have many differences, but share a similar syntax for defining HITs.

Inductive types with conditions can specialize to the CITs defined in all of them
using the elaboration implemented in \CubicalAgda{}.
Some work (like~\cite[\S 1.5]{ICTT}) even use a condition-like syntax directly,
presenting the constructors as parameterized by intervals and attached by some reduction rules.

Rewriting type theory~\cite{RTT} where custom rewriting rules could be defined might be an even
more general case of conditions, but its logical soundness requires translation into extensional type theory~\cite[\S 6.4]{RTT},
while we are working in an intensional type theory. Apart from that, HITs are impossible to define in rewriting
type theory since path constructors could only be postulated as axioms (\cite{AgdaHoTT} uses this strategy).

The Lean~\cite{Lean} theorem prover has a semi-axiomatic (there is no way to identify
two terms with different heads in Lean without using axioms) built-in support for quotients,
but the quotient type is only a special case of HITs, not to mention inductive types with conditions.

The setoid type theory~\cite{setoidtt} and observational type theories~\cite{OTT,OTT2} can have quotients,
but these type theories do not support HITs either.

\subsection{Future work} 
\label{sub:future-work}

Type theory with conditions can be extended in many ways, including:

\begin{enumerate}
\item Conditions that are currently partial pattern matching clauses.
  We may extend it with copatterns~\cite{Copat}.
\item Similarly, we could also integrate inductive-recursive~\cite{IndRec} types or
  inductive-inductive~\cite{IndInd} types with conditions and discuss their semantics,
  or even more general inductive definitions~\cite{ToS,AKJr}.
  The prior two features are implemented in Arend already.
\item We may develop a means to perform a ``termination-check'' on the conditions in order to
  have self-referencing conditions (that refers to the constructor being attached to).
  This will allow us to define an alternative encoding of integers by the subtraction
  of two natural numbers factored by the cancellation rule.
  This will require a new proof of~\cref{lem:terminate}.
\item It would be interesting to see how conditions would interact with overlapping
  and order-independent patterns~\cite{OOP,OOP2}. For instance, we can have a constructor
  of type $\Nat→\Nat→\Dat D$ with two conditions $\zerocon,x⇒ \cons c~x$ and
  $x, \zerocon ⇒ \cons c~(\succon~x)$ (given $\cons c : \Nat → \Dat D$).
  Note that the two conditions have an overlapping case $\zerocon,\zerocon$ and one of them
  reduces to $\cons c~\zerocon$ while the other reduces to $\cons c~(\succon~\zerocon)$, 
  so pattern matching on this constructor will be checked for unusual definitional equalities.
  The overlapped case is eliminated by the case-tree translation but looks peculiar in the
  concrete syntax. Allowing overlapping but confluent patterns may improve the semantics.
\item Overlapping and order-independent patterns can make records with conditions useful.
  If there is a way to generalize over \textit{patterns} in a record, we can turn some
  simple propositional equalities definitional.
  This can dramatically simplify reasoning over structures --- for instance, if we can
  turn the identity rules in monoidal structures definitional (which is true for simple monoids
  like $\mathbb N, \mathbb Z$ but yet unsure for $\mathbb Q$, matrices, free groups, etc.),
  we can avoid a lot of rewriting that may confuse mathematicians.
\item It would be interesting to see if we can generalize coconditions to be \emph{abstract},
  say, by allowing a function input to have a cocondition, so we know partially how it computes.
  This notion can be phrased as making pattern synonyms first-class citizens in the type theory.
  This can be used to define, for instance, the type of groups where the operation computes definitionally
  on the identity element.
\end{enumerate}

\printbibliography
\end{document}